\let\csname equation*\endcsname\relax
\let\csname endequation*\endcsname\relax
\newcommand{\revtodo}[2][]{{%
 \let\marginpar\marginnote
 \reversemarginpar
 \renewcommand{\baselinestretch}{0.8}%
 \todo[#1]{#2}}}
\tikzstyle{vert}=[circle,inner sep=1.5,fill=white,draw,minimum size=.5cm]
\tikzstyle{vertex}=[circle,draw=black,minimum size=8pt,inner sep=1pt]
\tikzstyle{vertex2}=[circle,draw=black,minimum size=15pt,inner sep=2pt]
\tikzstyle{edge}=[]
\tikzstyle{ypath}=[ultra thick]
\tikzstyle{dottedEdge}=[dotted,thick]
\tikzstyle{small-vertex}=[circle,draw=black,minimum size=6pt,inner sep=0pt,fill=white]
\tikzstyle{thinedges}=[draw=gray!30]
 \tikzstyle{boxes}=[draw,thick, rounded corners=3mm,text width=2.7cm,align=center,text opacity=1,fill opacity=1,fill=white]
\tikzstyle{unk}=[fill=gray!25!white]
\newtheorem{theorem}{Theorem}{\bfseries}{\normalfont}
\newtheorem{obs}{Observation}{\bfseries}{\normalfont}
\newtheorem{cor}{Corollary}{\bfseries}{\normalfont}
\newtheorem{prop}{Proposition}{\bfseries}{\normalfont}
\newtheorem{lem}{Lemma}{\bfseries}{\normalfont}
\newtheorem{defi}{Definition}{\bfseries}{\normalfont}
\DeclareMathOperator{\poly}{poly}
\newcommand{\decprob}[3]{%
  \begin{center}%
    \begin{minipage}{0.9\linewidth}%
      \textsc{#1}\\
      \textbf{Input:} #2\\
      \textbf{Question:} #3
    \end{minipage}%
  \end{center}%
}
\newcommand{\proofparagraph}[1]{
  \medskip
  \noindent\emph{#1}\quad}
\newlist{propertylist}{enumerate}{10}
\setlist[propertylist]{label=\roman*),leftmargin=*}
\crefname{propertylisti}{Property}{Properties}
\newcommand{\SL}{\textsc{Sub\-graph}\xspace}
\newcommand{\ML}{\textsc{ML-Sub\-graph}\xspace}
\newcommand{\MLlong}{\textsc{Multi-Layer Sub\-graph}\xspace}
\newcommand{\PMLlong}{$\Pi$ \MLlong}
\newcommand{\PML}{$\Pi$\nobreakdash-\ML}
\newcommand{\gprop}[1]{\textsf{#1}}
\newcommand{\GML}[1]{\gprop{#1}-\ML}
\newcommand{\MatchML}{\GML{Matching}}
\newcommand{\FactorML}{\GML{$c$-Factor}}
\newcommand{\PathML}{\GML{Hamiltonian}}
\newcommand{\MCC}{\textsc{Multicolored Clique}\xspace}
\newcommand{\NP}{\ensuremath{\textsf{NP}}\xspace}
\newcommand{\XP}{\ensuremath{\textsf{XP}}\xspace}
\newcommand{\FPT}{\ensuremath{\textsf{FPT}}\xspace}
\newcommand{\W}[1]{\ensuremath{\textsf{W[#1]}}\xspace}
\newcommand{\paramProblem}{\ensuremath{L}}
\crefname{rrule}{Rule}{Rules}
\newcommand{\NN}{\mathbb{N}\xspace}
\newcommand{\F}{\mathcal{F}\xspace}
\title{Assessing the Computational Complexity of Multi-Layer Subgraph Detection\footnote{An extended abstract~\cite{BKKMNS2017} appeared in the proceedings of the 10th International Conference on Algorithms and Complexity (CIAC 2017), held in Athens, Greece, May 24-26 2017. This long version appeared in Network Science~\cite{BredereckKKMNS19}. Work started while all authors were with TU~Berlin.}}
\begin{document}



\author[1]{Robert Bredereck}
\affil[1]{\small Algorithmics and Computational Complexity, Faculty~IV, 
 TU Berlin, Germany,
 \texttt{\{robert.bredereck,~h.molter,~rolf.niedermeier\}@tu-berlin.de}}
\author[2]{Christian Komusiewicz}
\affil[2]{Fachbereich Mathematik und Informatik, Philipps-Universität Marburg, Germany, 
 \texttt{komusiewicz@informatik.uni-marburg.de}}
\author[3]{Stefan Kratsch}
\affil[3]{Humboldt-Universit\"at zu Berlin, Germany, 
 \texttt{kratsch@informatik.hu-berlin.de}}
\author[1]{Hendrik~Molter}
\author[1]{Rolf~Niedermeier}
\author[4]{Manuel~Sorge}

\affil[4]{Faculty of Mathematics, Informatics and Mechanics, University of Warsaw, Poland, 
  \texttt{manuel.sorge@mimuw.edu.pl}
  }

\date{}

\maketitle


\begin{abstract}
Multi-layer graphs consist of several graphs, called layers, where
the vertex set of all layers is the same but each
layer has an individual edge set.
They are motivated by real-world problems where entities (vertices) 
are associated via multiple types of relationships (edges in different layers).
We chart the border of computational (in)tractability 
for the class of subgraph detection problems on multi-layer graphs,
including fundamental problems such as maximum-cardinality matching, finding certain
clique relaxations, or path problems.
Mostly encountering hardness results, sometimes even for two or three layers,
we can also spot some islands of computational tractability.


\bigskip

\noindent \textbf{Keywords:} \emph{Parameterized Computational Complexity, Exact Algorithms, Multi-modal Data, Matching, Hamiltonian Path, Community Detection}

\end{abstract}


\section{Introduction}
\emph{Multi-layer} graphs consist of several graphs, called \emph{layers}, where
the vertex set of all layers is the same but each
layer has an individual edge set~\cite{MR11,Kiv+14,Boc+14}. They are also known as multi-dimensional networks~\cite{BCGM13}, multiplex networks~\cite{MRMMO10},
edge-colored multigraphs~\cite{CY14,ALMS15,agrawal_et_al18}, among others~\cite{Kiv+14}. In recent years, multi-layer
graphs have gained considerable attention 
because observational data often comes in
a multimodal nature. Typical topics studied here include
clustering~\cite{BGHS12,DFVN12,DFVN14,JP09,chen2017parameterized}, 
detection of network communities~\cite{KL15,ZWZK07}, data privacy~\cite{RMT15}, matching problems~\cite{chen2018stable}, and general
network properties~\cite{BCGM13}.

In several of these applications, the goal is to identify vertex subsets
of a multi-layer graph that exhibit a certain structure in each
layer. For example, motivated by applications in genome
comparison in computational biology, \citet{GHPR03} searched for maximal vertex subsets
in a two-layer graph that induce a connected graph in each
layer.
\citet{JP09} and \citet{BGHS12} searched for vertex
subsets that induce dense subgraphs in many layers. Such vertex subsets model
communities in a multimodal social network.

To the best of our knowledge, however, a systematic study on
computational complexity classification of such problems is lacking.
Typically, authors observe the generalization of hardness results for
the one-layer case to the multi-layer one~\cite{BGHS12,JP09} or
perform case studies for special types of subgraphs~\cite{ALMS15,agrawal_et_al18}.
From a general algorithmic viewpoint we are aware only of Cai and
\citet{CY14} who derived some complexity results for general classes of subgraphs, focusing mostly
on the two-layer case. Our aim in this article is hence to build a
general foundation for studying the worst-case computational
complexity of a wide class of multi-layer subgraph problems, and to
provide results that may stimulate more specific algorithmic analyses.

We first give a general problem definition that encompasses the
problems sketched above. All of these problems can be phrased as finding a
large vertex subset that induces graphs with an interesting property
in many layers. 
Motivated by the heterogeneity of the desired
properties and analogous to the one-layer case~\cite{GJ79}, we give a problem definition with a generic 
\emph{graph property~$\Pi$} as part of the problem name. Here, $\Pi$ is formally a fixed set of graphs. 
For example, $\Pi$ may be the
set of all connected graphs or the set of all complete graphs. This allows us to study a wide variety of graph properties; 
we remark that throughout the paper we always assume that testing membership is decidable for the graph properties $\Pi$ under consideration.

\decprob{\PMLlong (\PML)}%
{A set of undirected graphs $G_1,\ldots,G_t$ all on the same vertex set~$V$ and two
positive integers~$k$ and~$\ell$.} 
{Is there a vertex set~$X\subseteq V$ with $|X|\geq k$ such that for at
least~$\ell$ of the input graphs~$G_i$ it holds that the induced subgraph~$G_i[X]$ has the property~$\Pi$?}

\noindent We study \PML mostly in the context of parameterized
computational complexity analysis~\cite{Cy15,DF13,FG06,Nie06}.  In contrast to classical
computational complexity analysis, where mostly only the input size is
used to estimate the worst case running time of an algorithm, we
consider running times depending also on secondary measures of the
input instances, so-called \emph{parameters}. We study \PML with
respect to the following natural parameters: 
\begin{compactitem}
\item the total number~$t$ of
layers,
\item the number~$k$ of vertices to select, and
\item the
number~$\ell$ of layers to select, as well as
\item their dual parameters~$|V| - k$ and $t - \ell$,
\end{compactitem} 
and combinations of
these parameters. A combination of two parameters can be thought of as
the parameter obtained by taking the sum of the two individual
parameters. The parameters~$|V| - k$ and $t - \ell$ are also
called \emph{deletion} parameters because they correspond to the
equivalent problem of deleting at most $|V| - k$ vertices and
$t - \ell$ layers to obtain a desired subgraph in each remaining
layer.

In analyzing the parameterized computational complexity with respect
to a parameter~$p$, we aim to find efficient algorithms if $p$ is
small. This is formalized in the concept of \emph{fixed-parameter
  tractability (\FPT)} which states that there exists an algorithm that
produces a solution in $f(p) \cdot \poly(|I|)$ time, where~$f$ is a
computable function and $|I|$ is the size of the input instance. In
contrast, polynomial-time algorithms for a constant parameter
value~$p$ may have running time $O(|I|^{g(p)})$ for some computable
function~$g$. Such algorithms, called \emph{\XP-algorithms},
usually have a prohibitively large running time, even for small values of~$p$. Using \emph{W[1]-hardness}, a
concept analogous to \NP-hardness, one can show that a problem is unlikely to admit an \FPT-algorithm.
To fully assess the computational complexity behavior of problems it is important to study them under various parameterizations and combinations thereof~\cite{DF13,Nie10,KN12}.
Note that \W{1}-hardness when parameterized by a
combined parameter~$(p, q)$ implies \W{1}-hardness when parameterized
by either~$p$ or~$q$. On the contrary, \W{1}-hardness with respect to a
parameter~$p$ may still allow for an \FPT-algorithm with respect to a combination
of~$p$ with an additional parameter. 

\begin{table}[t]
    
    \caption{Result Overview; $k$ is the number of vertices to select, $\ell$ is the
    number of layers to select, and $t$ is the total number of layers. 
    A graph property $\Pi$ is \emph{vertex-partitionable} if one can compute
    partitions of a graph into maximal components that each satisfy $\Pi$ in
    polynomial time; for details see
    \autoref{def:partitioning-properties} in \autoref{sec:meta}. A graph property is \emph{staggered} if we can build a certain gadget based on the property, see \autoref{def:staggered} in \autoref{sec:meta} for details. For all \FPT, \XP, and \W{1}-hardness results, we also have corresponding \NP-hardness results. (In the case of vertex-partitionable graph properties, we get \NP{}-hardness if the property is also staggered, which is the case for all properties we consider in this paper.)} 
\label{tab:results}
  \centering
  \begin{tabular}{@{}l@{}l@{}l@{}}
    \toprule
    Graph Property $\Pi$ & Complexity of \PML{} & Reference\\ 
    \midrule
    Hereditary and & & \\
    - contains finitely many graphs & polynomial-time solvable &
    \autoref{thm:hereditary}\\
    - includes all complete \emph{and} & \multirow{2}{*}{\FPT wrt.\ $(k, \ell)$} &  \multirow{2}{*}{\autoref{thm:hereditary}}\\ \quad all edgeless graphs & &\\
    - includes \emph{either} all complete & \multirow{2}{*}{\W{1}-hard wrt.\ $k$ for all $\ell$} & \multirow{2}{*}{\autoref{thm:hereditary}} \\ \quad \emph{or} all edgeless graphs & &\\
    - characterizable by finitely many & \multirow{2}{*}{\FPT wrt.\ $(|V|-k, t-\ell)$} & \multirow{2}{*}{\autoref{prop:forb-subgr}}\\ \quad forbidden induced subgraphs & &\\
    \midrule 
    Vertex-partitionable & \FPT wrt.\ $t$ and \XP wrt.\ $\ell$ &\autoref{thm:partitioning-properties}\\
    Staggered & \W{1}-hard wrt.\ $(k, \ell)$ &\autoref{thm:meta}\\
    \midrule
    \multirow{2}{*}{\gprop{Matching}} & polynomial-time solvable for all $\ell\le 2$& \multirow{2}{*}{\autoref{thm:matching}}\\
    & \W{1}-hard wrt.\ $k$ for all $\ell\ge3$&\\
    \gprop{$c$-Factor} & \W{1}-hard wrt.\ $k$ for all $\ell\ge2$ &
    \autoref{thm:cfactor}\\
        \gprop{Hamiltonian} & \W{1}-hard wrt.\ $k$ for all $\ell\ge2$ & \autoref{thm:hamiltonian}\\
    \bottomrule
    \end{tabular}
\end{table}

\paragraph{Our Results.} We give an overview of our results in \autoref{tab:results}. Observe that for \PML, \NP-hardness and \W{1}-hardness for either $k$ or $|V|-k$ in the single-layer case directly imply hardness of the multi-layer case.
 Our analysis of \PML starts with several easy observations on \emph{hereditary}
graph properties~$\Pi$, that is, $\Pi$ is closed under vertex deletions. Such properties~$\Pi$ have been well-studied in the
single-layer case. Using Ramsey arguments and a theorem due to \citet{KR02}, we 
show a trichotomy for the complexity of \PML with respect to the inclusion of edgeless or complete graphs in~$\Pi$, distinguishing between polynomial-time solvability, fixed-parameter
tractability with respect to the combined parameter~$(k, \ell)$, and
\W{1}-hardness with respect to~$k$ for all~$\ell$ (\autoref{thm:hereditary}). Second, we generalize
a result due to \citet{cai1996fixed} by showing that, for graph properties~$\Pi$
characterized by a finite number of forbidden induced subgraphs, \PML
is fixed-parameter tractable with respect to the combined parameter~$(t - \ell, |V|- k)$, and that in this case it additionally admits a polynomial-size problem kernel
(\autoref{prop:forb-subgr}).

Next, we turn to graph properties that are not necessarily
hereditary. An easy example would be connectedness. For finding connected graphs
of order at least~$k$ in at least~$\ell$ of~$t$~layers, there is a simple \FPT{}-algorithm
with respect to~$t$ which is also an \XP-algorithm with respect
to~$t - \ell$ or with respect to~$\ell$. This algorithm admits a
generalization to each graph property that implies certain
good-natured partitions of the input graphs
(\autoref{thm:partitioning-properties}), for example so-called $c$-cores~\cite{Sei83} and
$c$-trusses~\cite{Coh08}. On the flip side, we spot \W{1}-hardness for
\PML\ for the combined parameter~$k$ and~$\ell$ for a specific large class of graph properties~$\Pi$ which we call \emph{staggered} (\autoref{thm:meta}). Such properties allow to efficiently construct graphs which have three parts which are obligatory, optional, and forbidden for the solution, respectively. Staggered graph properties~$\Pi$ include connected graphs, $c$-cores, and
$c$-trusses, for example (see \autoref{cor:hardness}).

Finally, we exhibit three graph properties~$\Pi$ for which \autoref{thm:meta} and \autoref{cor:hardness} already yield hardness, but where we can achieve stronger results by closer inspection. 
First, we consider the property that includes all graphs that admit a perfect matching. While finding a vertex subset of size~$k$ that induces subgraphs with a perfect matching in two layers is polynomial-time
solvable, it becomes \NP-hard and \W{1}-hard with respect to the number of vertices to select~$k$ in
three layers (\autoref{thm:matching}). For a generalization of matchings,
so-called $c$-factors, the subgraph detection problem already becomes \NP-hard
and \W{1}-hard with respect to the number of vertices to select~$k$ in two layers for all~$c\ge 2$~(\autoref{thm:cfactor}). 
Furthermore, we consider the property that includes all graphs that admit a Hamiltonian path.
While finding an order-$k$ subgraph containing a Hamiltonian path is fixed-parameter tractable with
respect to the number~$k$ of vertices to select in one layer~\cite{monien1985find}, it becomes
\W{1}-hard in two layers (\autoref{thm:hamiltonian}). 

Apart from providing a broad overview over the complexity of \PML, the main technical contributions are revealing conditions on $\Pi$ that make \PML\ hard
(\autoref{thm:meta}) 
and understanding the transition from
tractability to hardness for perfectly matchable subgraphs
(\autoref{thm:matching}) and Hamiltonian subgraphs
(\autoref{thm:hamiltonian}). 

\paragraph{Related Work.} \looseness=-1 As mentioned in the beginning,
despite the numerous practical studies related to multi-layer
networks, systematic work pertaining to the computational complexity
of \PML\ is not well-developed. The following special cases were
studied from this viewpoint. \citet{GHPR03} and
\citet{BHP08} studied the case where the
graph property~$\Pi$ is the set of connected graphs and
$t = \ell = 2$, that is, they studied the problem of finding connected
subgraphs of size at least~$k$ in two layers. They showed that the resulting problem is
polynomial-time solvable. In contrast, \citet{CY14}
studied a modified version of this problem, where the desired vertex
subset shall be of size \emph{exactly}~$k$ instead of at
least~$k$.\footnote{Note that for hereditary graph properties requiring
  vertex sets of size exactly~$k$ or at least~$k$ is equivalent in
  terms of computational complexity. However, for connectivity, which
  is not a hereditary property, it can make a difference.} 
  They showed
\NP-hardness and \W{1}-hardness with respect to the number~$k$ of
vertices to select and with respect to~$|V| - k$. \citet{ALMS15} gave a~$23^{tk} \cdot \poly(n, t)$-time
algorithm for the case where~$\Pi$ is the set of acyclic graphs and
$t = \ell$.

In terms of general graph properties, \citet{CY14}
proved a trichotomy for hereditary graph properties similar to the one
we give in \autoref{thm:hereditary} in the setting where the input
consists only of two layers whose edge-sets are disjoint and one wants
to satisfy two possibly distinct graph properties in the corresponding
layers (see their Theorem~6). They also showed that \PML\ is
fixed-parameter tractable parameterized by~$|V| - k$ in the following
modified setting: Each layer~$i$ has a specific graph property~$\Pi_i$
which is characterized by a finite set of forbidden induced subgraphs
and the vertex sets of the layers may differ (see their Theorem~7).
Our \autoref{prop:forb-subgr} is strongly related to this result and
when focusing on the plain \PML\ problem, it can be seen as a
generalization of their result and additionally provides a
polynomial-size problem kernel. 

In mathematical terms, multi-layer graphs are equivalent to edge-colored multigraphs. These have been studied from an algorithmic viewpoint; an overview can be found in the surveys of Bang-\citet[Chapter~26]{BG09}, and \citet{KL08}. Most of the algorithmic results presented there pertain to paths and cycles which do
not contain two consecutive edges in the same layer and to related
questions like connectedness and Hamiltonicity using this notion of
paths or cycles.

If a multi-layer graph is additionally equipped with a linear ordering of the layers, then the model is mathematically equivalent to \emph{temporal
  graphs}~\cite{holme2012temporal,holme2015,michail2016introduction,latapy2017stream}. In this model, each layer models the state of the data set at a different point in time. There is a huge body of research in this area and we refer to the aforementioned surveys for an overview. 

\paragraph*{Organization.} 
\Cref{sec:prelim} contains some basic definitions and notation. In \Cref{sec:hereditary} we give general results for hereditary graph
properties~$\Pi$. In \Cref{sec:meta} we give general results for large classes of graph
properties~$\Pi$ that are not necessarily hereditary. In the two succeeding sections, we take a closer look at selected graph properties where a more in-depth inspection reveals stronger results compared to the general result of~\Cref{sec:meta}. In \cref{sec:mlmatching} we investigate the parameterized computational complexity of \gprop{Matching}-\ML{} and the generalization \gprop{$c$-Factor}-\ML{}. 
In \cref{sec:path-hard} we analyze the parameterized computational complexity of \gprop{Hamiltonian}-\ML{}. 
We give a conclusion and directions for future work in \Cref{sec:conclusion}.

\section{Preliminaries}\label{sec:prelim}

\paragraph*{Parameterized Complexity.}

A \emph{parameterized problem} is a language
$\paramProblem \subseteq \Sigma^* \times \mathbb{N}$, where the second component
in an instance $(I, k) \in \Sigma^* \times \mathbb{N}$ is called the
\emph{parameter}. In the case of combined parameters, we write
  a tuple, e.g.\ $(k_1, k_2)$. This is just notation for a parameter
  $k = k_1 + k_2$.
  A parameterized
problem~$\paramProblem$ is \emph{fixed-parameter tractable} if there is an
algorithm that for each instance
$(I, k) \in \Sigma^* \times \mathbb{N}$ decides whether $(I, k) \in \paramProblem$
in~$f(k) \cdot |I|^{O(1)}$ time, where~$f$ is a computable function
and $|I|$ is the encoding length of the input~$I$. We also say that $|I|$ is the
\emph{instance size}. The class of fixed-parameter tractable problems
is~\FPT. A parameterized problem $\paramProblem$ is in the class \XP\ if there is
an algorithm that decides for each instance
$(I, k) \in \Sigma^* \times \mathbb{N}$ whether $(I, k) \in \paramProblem$ in~$|I|^{f(k)}$ time, where~$f$ is a computable function.
\W{1}-hard parameterized problems are generally assumed not to be
fixed-parameter tractable. \W{1}-hardness can be shown by a
parameterized reduction from another \W{1}-hard problem such as
\textsc{Independent Set} (given a graph~$G$ and an integer $k$, decide
whether there is a $k$-vertex subset in~$G$ that does not contain any
edge). A \emph{parameterized reduction} from a parameterized
problem~$Q$ to a parameterized problem~$\paramProblem$ is an algorithm that maps
an instance~\((I,k)\) of~\(Q\) to an instance~$(I',k')$ of~\(Q\) in
$f(k) \cdot |I|^{O(1)}$~time such that $(I,k) \in Q$ if and only
$(I',k') \in \paramProblem$ and $k'\leq g(k)$, where~\(f\)~and~\(g\) are arbitrary
computable functions.

Furthermore, parameters allow us to mathematically rigorously study
efficient data reduction. Formally, given a parameterized problem~$\paramProblem$, a \emph{kernelization algorithm} is a poly\-nomial-time algorithm that maps instances~$(I, k)$ of~$\paramProblem$ to instances~$(I', k')$ (called a \emph{problem kernel}) of~$\paramProblem$ such that the size of $I'$ is
upper-bounded by a function of the parameter~$p$ and $(I', k')$ is a yes-instance if and only if $(I, k)$ is a yes-instance. If the kernel size can be upper-bounded by a polynomial in the parameter, we call it a \emph{polynomial kernel}.
 For more context and methodology we refer to the
literature~\cite{Cy15,DF13,FG06,Nie06}.


\paragraph*{Graphs.}  
All graphs in this work are undirected and without self loops or parallel edges. We
use standard graph notation~\cite{Die05}. A graph property~$\Pi$ is \emph{hereditary} if removing any vertex from a graph in~$\Pi$ results again in a graph in~$\Pi$. We consider the following graph properties. A graph is a \emph{$c$-core} if each vertex has degree at least~$c$~\cite{Sei83}. A graph is a \emph{$c$-truss} if it is connected and each edge is contained in at least~$c-2$ triangles~\cite{Coh08}. We say that a graph is \emph{Hamiltonian} if it contains a simple path that comprises all vertices in the graph. The \emph{length of a path} is the number of its edges. A \emph{$c$-factor} in a graph is a subset of the edges such that each vertex is incident with exactly $c$~edges. In sans serif font face we often denote graph properties. For example, \gprop{$c$-Truss} is the set of all $c$-trusses.
By \gprop{Matching} we refer to the set of all graphs containing a perfect matching; note that then finding a maximum matching is equivalent to finding a maximum sized \gprop{Matching}-subgraph. By \gprop{$c$-Factor} we refer to the set of all graphs containing a $c$-factor. For a list of definitions of all graph properties mentioned in this article, see Appendix~\ref{appendix:definitions}.

\section{Hereditary Graph Properties}\label{sec:hereditary}
In this section we study the (parameterized) computational complexity of \PML with respect
to hereditary graph properties~$\Pi$. Many natural graph properties fall into this
category, for example being planar or being a forest. We give a trichotomy of the complexity,
classifying each problem either as polynomial-time solvable or as \NP-hard, and further
classifying the parameterized complexity of the \NP-hard cases with respect to the parameters number~$k$ of vertices to select and number~$\ell$ of layers to select.
In addition, we observe fixed-parameter tractability for the 
deletion parameters~$|V| - k$ and~$t - \ell$.

The single-layer case has been studied by \citet{lewis1980node} as well as \citet{KR02};
the latter studied the parameterized complexity of the subgraph detection problem for hereditary properties. \citet{CY14} studied this problem on two layers with disjoint edge sets (see their Theorem~6) and for multiple layers when layers cannot be deleted (in their Theorem~7). We generalize below
the mentioned results to the multi-layer case where layers can be deleted. This allows us to classify all hereditary graph properties $\Pi$ by the parameterized complexity of the corresponding \PML problem.
\begin{prop}[Complete classification of hereditary graph properties]
\label{thm:hereditary}
If $\Pi$ is a hereditary graph property, then the following statements are true.
\begin{compactenum}
  \item If $\Pi$ excludes at least one complete graph and at least one edgeless
  graph, then \PML is solvable in polynomial time.
  \item If $\Pi$ includes all complete graphs and all edgeless graphs, then
  \PML is \NP-hard and \FPT when parameterized by the combined parameter number~$k$ of vertices to select and number~$\ell$ of layers to select.
\item If $\Pi$ includes either all complete graphs or all edgeless graphs (but not both), then
\PML is \NP-hard and \W{1}-hard when parameterized by the number~$k$ of vertices to select for all numbers~$\ell$ of layers to select.
\end{compactenum}
\end{prop}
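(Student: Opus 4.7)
The plan is to handle the three cases separately, combining Ramsey-theoretic arguments with the known single-layer classifications due to Lewis--Yannakakis and Khot--Raman. The main technical obstacle will be case~(ii), where an iterated Ramsey argument must dovetail with a brute-force regime so that one of the two strategies always applies.

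\textbf{Case (i).} Fix $a, b$ with $K_a \notin \Pi$ and $\overline{K_b} \notin \Pi$. Since $\Pi$ is hereditary, no $H \in \Pi$ contains $K_a$ or $\overline{K_b}$ as an induced subgraph, so Ramsey's theorem yields $|V(H)| < R(a, b)$ for every $H \in \Pi$. Hence if $k \geq R(a, b)$, the answer is trivially \textsc{No}; otherwise $k$ is bounded by the constant~$R(a,b)$, so enumerating all $\binom{|V|}{k}$ candidate subsets~$X$ and checking $G_i[X] \in \Pi$ per layer runs in polynomial time.

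\textbf{Case (ii).} \NP-hardness follows by setting $t = \ell = 1$ and invoking Lewis and Yannakakis for the (tacitly assumed nontrivial) hereditary property~$\Pi$. For fixed-parameter tractability in $k + \ell$, I plan an iterated Ramsey argument. Define $R^{(0)}(k) := k$ and $R^{(i)}(k) := R(R^{(i-1)}(k), R^{(i-1)}(k))$; then $R^{(\ell)}(k)$ is a computable function of $k$ and~$\ell$. If $|V| < R^{(\ell)}(k)$, brute-force over all $2^{|V|}$ vertex subsets in $2^{R^{(\ell)}(k)} \cdot \poly(t, |V|)$ time. Otherwise (if $t < \ell$ output \textsc{No}; else) fix any $\ell$ input layers and build a nested chain $Y_0 := V \supseteq Y_1 \supseteq \cdots \supseteq Y_\ell$ by extracting at step~$i$, via Ramsey's theorem applied to $G_i[Y_{i-1}]$, a subset $Y_i \subseteq Y_{i-1}$ that is a clique or an independent set in layer~$i$ with $|Y_i| \ge R^{(\ell-i)}(k)$. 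The set $Y_\ell$ has at least $k$ vertices and induces a clique or independent set in each of the chosen $\ell$ layers, both of which lie in~$\Pi$ by assumption, so it is a valid solution.

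\textbf{Case (iii).} The single-layer ($\ell = 1$) version is precisely Khot and Raman's \W{1}-hardness with respect to~$k$ (together with \NP-hardness) for hereditary~$\Pi$ containing all cliques or all edgeless graphs. To lift this to arbitrary~$\ell$, use a padding reduction: given a single-layer instance~$(G, k)$, construct a multi-layer instance consisting of~$G$ together with $\ell - 1$ new layers over~$V(G)$, each being a complete graph (if $\Pi$ contains all cliques) or an edgeless graph (if $\Pi$ contains all edgeless graphs). Every vertex subset~$X$ induces a graph in~$\Pi$ in each padding layer, so a solution that satisfies~$\Pi$ in at least $\ell$ layers must additionally have $G[X] \in \Pi$, recovering the single-layer instance. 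Since the reduction preserves~$k$, both \NP-hardness and \W{1}-hardness transfer.
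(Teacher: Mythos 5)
Your proof is correct and follows essentially the same route as the paper's: a single Ramsey bound for case~(i), an iterated/nested Ramsey number combined with brute force below the threshold for case~(ii) (your explicit chain $Y_0\supseteq\cdots\supseteq Y_\ell$ is just the unrolled form of the paper's induction on $\ell$), and lifting the Lewis--Yannakakis and Khot--Raman single-layer results for cases~(ii) and~(iii). Your padding with $\ell-1$ complete or edgeless layers in case~(iii) merely makes explicit what the paper leaves as a remark.
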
%
\begin{proof}
We utilize the concept of \emph{Ramsey numbers}.
The Ramsey number $R(p, q)$ is the minimum number~$x$ such that every graph with~$x$ vertices has either a clique of size $p$ or an
independent set of size $q$. It is well-known that $R(p, q) \leq
{p+q-2 \choose q-1}$~\cite{Jukna11}. We give separate proofs for
all three statements in the theorem. Note that for the second and third statement,
\NP-hardness even in the single-layer case was shown by \citet{lewis1980node}.

\proofparagraph{Statement 1.} Let $p, q$ be the sizes of the smallest
excluded complete and edgeless graph, respectively. Note that any graph on at
least $R(p, q)$ vertices contains either a clique of size~$p$ or an independent
set of size $q$ and hence is not included in $\Pi$. Hence, there are only
finitely many graphs that have property $\Pi$. Furthermore, if $k
\geq R(p, q)$, then we face a no-instance.

If $k < R(p, q)$, then we consider every order-$k$ vertex subset $X$ and check whether $G[X] \in \Pi$ in at least $\ell$ layers. If this is the case for some~$X$, then we output~$X$; otherwise, we output that the instance is a no-instance. The running time for this algorithm is~$O(t{n \choose k}f(k))$ where $f(k)$ is the time to check
membership of $\Pi$. Note that $k$ is constant since~$k<R(p,q)$ and $p$ and $q$ are
constants only depending on $\Pi$. Hence, the overall running time is polynomial.
 
\proofparagraph{Statement 2.} The \NP-hardness follows from the \NP-hardness of the single-layer case. For the proof of fixed-parameter tractability, we introduce nested Ramsey
numbers as follows.
\begin{align*}
R^{(1)}(p, q) &= R(p, q),\\
R^{(i)}(p, q) &= R(R^{(i-1)}(p, q), R^{(i-1)}(p, q)).
\end{align*}
We show that if $|V| \geq R^{(\ell)}(k,k)$, then we face a yes-instance. Indeed, we show the more general statement that, for each set of~$\ell$ layers~$G_i$, $i = 1, \ldots, \ell$, on vertex set $V$ with $|V| \geq R^{(\ell)}(k,k)$, there is a vertex subset $X \subseteq V$ with $X \geq k$ such that $G_i[X] \in \Pi$ for each~$i \in 1, \ldots, \ell$. We prove this by induction on $\ell$.

For $\ell=1$ we have $|V| \geq R(k, k)$. Hence, each graph on
vertex set~$V$ contains either a clique of size $k$ or an independent
set of size $k$, proving the statement. Assume that $\ell > 1$ and that
the statement holds for each $\ell' < \ell$. Since
$|V| \geq R^{(\ell)}(k, k)$, each layer has either a clique of size
$R^{(\ell-1)}(k, k)$ or an independent set of size
$R^{(\ell-1)}(k, k)$. Let $X' \subseteq V$
with $|X'| = R^{(\ell-1)}(k, k)$ be either a clique or an independent
set in layer~1. By the induction hypothesis, there is a vertex
set $X\subseteq X'$ with~$|X|\geq k$ such that $G_i[X] \in \Pi$ for
all~$i$ such that~$2\le i\le \ell$. Since~$X\subseteq X'$, we also have~$G_1[X] \in \Pi$. Hence, $G_i[X] \in \Pi$ on each of the
$\ell$ layers~$G_i$, as required.

The algorithm is now as follows. If $|V| \geq R^{(\ell)}(k, k)$, then
accept immediately. By the above, each subset of $\ell$ layers of the
input multi-layer graph contains a solution. Otherwise, if
$|V| < R^{(\ell)}(k, k)$, then find a solution by brute force, if it
exists: Simply try all possible vertex subsets~$X$ of size~$k$ and
check whether $G_i[X] \in \Pi$ for at least~$\ell$
layers~$i$. By heredity, if there is a solution, then there is one of size~$k$ and thus the algorithm is correct. If $g(k)$ denotes the time
needed to check whether $G_i[X] \in \Pi$ for some fixed $i$, then the
running time for the brute-force step is at most
$|V|^{k + O(1)}\cdot t \cdot g(k) \leq (R^{(\ell)}(k, k))^{k + O(1) } \cdot g(k) \cdot t = f(k, \ell) \cdot t$ for some function~$f$, showing that the problem is \FPT\ with respect to $k$ and $\ell$ combined.



\proofparagraph{Statement 3.} \citet{KR02} showed that for
hereditary properties $\Pi$ including either all complete graphs or all edgeless graphs,
$\Pi$-\SL is \W{1}-hard when parameterized by $k$. This directly translates to the multi-layer case, as does \NP-hardness.
\end{proof}
Note that every hereditary graph property falls into one of the three cases of \autoref{thm:hereditary}.
Properties that fall into the first case are exactly those containing
only a finite number of graphs. In the following corollary, we give a number of hereditary properties $\Pi$
that fall in the second and third case and give the corresponding complexity results for \PML implied by
\autoref{thm:hereditary}. For their definitions we refer to the
literature~\cite{BLS99, Gol04} or to Appendix~\ref{appendix:definitions}.

\begin{cor}
\label{cor:fpt}
\begin{compactenum}
\item \PML is \NP-hard and \FPT when parameterized by the combined parameter number~$k$ of vertices to select and number~$\ell$ of layers to select for
$\Pi \in \{$\gprop{Asteroidal Triple Free
Graph}, \gprop{Chordal Graph}, \gprop{Comparability Graph}, \gprop{Interval Graph}, \gprop{Perfect Graph}, \gprop{Permutation Graph}, \gprop{Split Graph}$\}$.

\item \PML is \NP-hard and \W{1}-hard when parameterized by the number~$k$ of vertices to select for all numbers~$\ell$ of layers to select for
$\Pi \in \{$\gprop{$c$-Colorable Graph}, \gprop{Complete Graph}, \gprop{Complete Multipartite Graph}, \gprop{Edgeless Graph}, \gprop{Forest}, \gprop{Planar Graph}$\}$.
\end{compactenum}
\end{cor}
Next, we consider properties $\Pi$ whose complements are hereditary or, equivalently, $\Pi$ is closed under the operation of adding a new vertex~$v$ and connecting $v$ arbitrarily to the rest of the graph. For these we
can observe that polynomial-time solvability transfers to the
multi-layer case.
\begin{obs}
\label{thm:easyness}
Let $\Pi$ be a graph property such that whenever $G \in \Pi$ for some graph~$G$, then
we have that for all graphs $H=(V, E)$ if there is a vertex set $X\subseteq V$ such that $H[X]$ is isomorphic to~$G$, then $H \in \Pi$. 
If $\Pi$ can be decided in
$T(|V|)$ time for some function $T$, then \PML can be decided in $O(t\cdot T(|V|))$ time for
all numbers~$k$ of vertices to select and all numbers~$\ell$ of layers to select.
\end{obs}
\begin{proof}
Let $\Pi$ be a graph property such that if $G \in \Pi$ for some graph $G$ and
$H[X] = G$ for some graph $H$ and vertex set $X$, then $H \in \Pi$. Observe that if $G \notin \Pi$, then no induced subgraph of~$G$ can be in $\Pi$.
Let $(G_1, \ldots, G_t, k, \ell)$ be an instance of \PML. We decide for each graph $G_1, \ldots, G_t$ whether it satisfies property~$\Pi$. We
face a yes-instance if and only if there are at least $\ell$ graphs that have
property~$\Pi$: We can set $X = V$, and hence $|X| \geq k$, for any $k \leq n$.
\end{proof}
In the following corollary, we give two examples of properties $\Pi$ for which by \autoref{thm:easyness}
\PML is solvable in polynomial time.
\begin{cor}
\label{cor:easyness}
\PML is solvable in polynomial time for:
\begin{compactitem}
  \item $\Pi =$ ``The graph has maximum degree at least $x$.''
  \item $\Pi =$ ``The graph contains a triangle.''
  \item $\Pi =$ ``The graph has an $h$-index\footnote{The \emph{$h$-index} of a graph is the largest integer $h$ such that the graph contains at least $h$ vertices with degree at least $h$~\cite{ES12}.} of at least $x$.''
\end{compactitem}
\end{cor}

Finally, we consider the dual parameterizations for hereditary graph
properties characterized by a finite number of forbidden subgraphs. In
the single-layer case, this problem has been studied by \citet{lewis1980node} as well as
\citet{cai1996fixed}. A result of \citet[Theorem~7]{CY14} implies a fixed-parameter algorithm for \PML\ if
$t - \ell = 0$, that is, no layers can be deleted. Below we give a
more general fixed-parameter algorithm for the case where
$t - \ell \geq 0$ and we furthermore present a polynomial kernel.
\begin{prop}\label{prop:forb-subgr}
  Let $\Pi$ be a hereditary graph property that is characterized by finitely
  many forbidden induced subgraphs. Then \PML{} is \NP-hard and \FPT when
  parameterized by the combined parameter number~$t - \ell$ of layers to delete and number~$|V|
  - k$ of vertices to delete. It also admits a polynomial-size problem
  kernel with respect to this combined parameter.
\end{prop}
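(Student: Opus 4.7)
Set $s := |V| - k$ and $r := t - \ell$ for the two deletion parameters, and let $d$ denote the maximum number of vertices in any forbidden induced subgraph of~$\Pi$; by hypothesis $d$ is a constant depending only on~$\Pi$. NP-hardness follows from the single-layer case $t = \ell = 1$, which is the classical hereditary node-deletion problem and is NP-hard for every nontrivial hereditary~$\Pi$ by Lewis and Yannakakis~\cite{lewis1980node}.

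For \FPT{}, I would use a standard bounded-search-tree algorithm in the style of Cai~\cite{cai1996fixed}. Maintain a tentative deletion set $D \subseteq V$ and a tentative set $R \subseteq [t]$ of dropped layers. While some layer $i \notin R$ contains a forbidden induced subgraph $F$ in $G_i[V \setminus D]$ (detectable in $O(|V|^d \cdot t)$ time by exhaustive enumeration of vertex sets of size at most $d$), branch into at most $d + 1$ subcases: either add one of the $\leq d$ vertices of $V(F)$ to $D$, or add $i$ to $R$. Abort a branch once $|D| > s$ or $|R| > r$, and succeed once every non-dropped layer lies in~$\Pi$. The recursion has depth at most $s + r$ and branching degree at most $d + 1$, giving total running time $O((d+1)^{s+r} \cdot \poly(|V|, t))$ and hence \FPT{} for $s + r$.

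For the polynomial kernel, the plan combines two elementary cleanup rules with two applications of the Sunflower Lemma. (R1) If $G_i \in \Pi$, drop layer $i$ and decrement $t$ and $\ell$; both $s$ and $r$ are preserved. (R2) If a vertex $v$ lies in no forbidden induced subgraph of any remaining layer, add $v$ to the solution, delete $v$ from every $G_i$, and decrement $k$; the hereditary property together with the assumption on $v$ yields $G_i[Y] \in \Pi \Leftrightarrow G_i[Y \cup \{v\}] \in \Pi$ for every $Y \subseteq V \setminus \{v\}$, so this is correct. (R3) To bound the number of layers, pick one forbidden induced subgraph $F_i$ per surviving layer and apply the Sunflower Lemma to $\{V(F_i)\}_{i \in [t]}$; a sunflower with core $C$ and more than $s + r$ petals forces every solution's deletion set to intersect $C$ (otherwise more than $r$ of the sunflower layers are necessarily bad), which enables iteratively trimming sunflower layers until at most $d!(s+r+1)^d$ layers remain. (R4) Within each surviving layer, apply the Sunflower Lemma to the full family of forbidden induced subgraph vertex sets in $G_i$; excess petals force their tip vertices to be superfluous up to a single representative, letting (R2) prune those vertices. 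After all four rules are exhausted, both $t$ and $|V|$ are bounded by a polynomial in $s + r$.

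The main technical obstacle is (R4): the Sunflower Lemma reduces a set system, whereas our instance is a tuple of graphs, so ``removing an induced copy of a forbidden subgraph'' is not a local graph operation. The workaround is to use the sunflower structure only to identify vertices that, thanks to the invariant maintained by (R2), can be deleted without affecting the surviving forbidden subgraphs. Keeping (R3) exhaustively applied before (R4) ensures that the remaining layer count is already polynomially bounded, which is crucial for making the per-layer pruning translate into a global kernel-size bound.
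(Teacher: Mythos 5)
Your NP-hardness argument and your $(d+1)^{s+r}$ search tree are correct and essentially identical to the paper's. The gap is in the kernel. Rules (R1) and (R2) are fine, but (R3) and (R4) do not work as stated, and the obstacle you flag for (R4) is not resolved by your workaround --- moreover it afflicts (R3) just as much. The sunflower rule for \textsc{Hitting Set} licenses you to \emph{discard a set from the set family}: the surviving sunflower still forces the core to be hit, so the discarded set is hit for free. There is no corresponding operation on a tuple of graphs. In (R3), knowing that every feasible deletion set meets the core $C$ tells you that the single representative forbidden subgraph $F_i$ you chose for layer $i$ is destroyed for free, but layer $i$ may contain many other forbidden induced subgraphs, so you cannot declare the layer redundant; and physically deleting a layer (whether or not you decrement $\ell$) changes the answer in one direction or the other. (You would also need to tag each set with its layer: otherwise many layers sharing one representative subgraph collapse to a single set in the family, and the sunflower bound says nothing about the number of layers.) In (R4), the petal tips are by definition vertices that \emph{do} lie in forbidden induced subgraphs, so (R2) never becomes applicable to them; and deleting a vertex $v$ while decrementing $k$ is not a safe rule in general, because a solution $Y$ of the reduced instance with $|Y|=k-1$ cannot necessarily be padded back to size $k$ --- $\Pi$ is hereditary, not closed under adding vertices, which is exactly why (R2) needs its hypothesis.

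The paper sidesteps all of this with a trick you are missing: it performs the sunflower kernelization entirely inside an auxiliary \NP{} problem, \textsc{2-Color Hitting Set}, whose ground set is $V$ plus one fresh element per layer and whose sets are the vertex sets of forbidden induced subgraph occurrences, each augmented by its layer element (budgets $b=|V|-k$ and $w=t-\ell$). The sunflower reduction and the removal of elements lying in no set are legitimate there, yielding an instance of size polynomial in $b+w$; the result is then converted back into a \PML{} instance by an arbitrary polynomial-time many-one reduction, which exists because \textsc{2-Color Hitting Set} is in \NP{} and \PML{} is \NP-hard. This ``kernelize in a proxy problem, reduce back via NP-completeness'' pattern is precisely what makes the non-local set operations admissible. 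If you insist on a native, reduction-rule-based kernel on the graphs themselves, you would have to replace (R3)/(R4) by a marking argument (retain a polynomial-size family of representative forbidden subgraphs and prove every unmarked vertex and layer irrelevant), which is substantially more work than what you have written and is not implied by your current plan.
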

  \begin{proof}
  To see the fixed-parameter tractability, consider the search-tree algorithm that recursively searches for a forbidden induced subgraph~$G'$ in one of the layers, and branches, for each vertex~$v$ in~$G'$, into the branch of deleting~$v$ and, additionally, into the branch of deleting the layer of~$G'$. Hence, in each branch we either delete a layer or a vertex, so the depth of the search-tree is upper-bounded by~$(t - \ell) + (|V| - k)$. Finding~$G'$ takes polynomial time because there is only a constant number of different forbidden subgraphs and each one has constant size. Furthermore, each node in the resulting search tree has a constant number of children. Hence, the search-tree algorithm has a running time of $c^{t - \ell + |V| - k} \cdot \poly(I)$, where $c$ is a constant, and $I$ is the instance size, as required. This procedure is correct since each forbidden subgraph is either destroyed by deleting one of its vertices or by deleting the layer it exists in. Hence each of the remaining layers does not contain a forbidden subgraph and therefore has property~$\Pi$.

  \newcommand{\HS}{\textsc{Hitting Set}}
  \newcommand{\TCHS}{\textsc{2-Color Hitting Set}}
  To see that \PML{} admits a polynomial kernel with respect to the combined parameter $(t - \ell, |V| - k)$, we use a reduction 
  to \TCHS{}, a variant of \HS, and then apply a (basically folklore) kernelization for \TCHS{}. Herein, we are given two disjoint ground sets~$B$ and~$W$, a family~$\F$ of subsets of~$B \cup W$, and two integers~$b, w$. We are to decide whether there is a \emph{hitting set} $S \subseteq B \cup W$, that is, each subset~$F \in \F$ has $F \cap S \neq \emptyset$, containing $b$~elements from~$B$ and $w$~elements from~$W$. Clearly, \TCHS{} is contained in \NP.

  The reduction works as follows. Given an instance of \PML{}, we put the ground set~$B := V$ and put a distinct new vertex into $W$ for each layer. For each layer, we enumerate all forbidden induced subgraphs. This takes polynomial time, as the maximum size of these subgraphs is a constant. To define~$\F$, for each forbidden induced subgraph~$G'$ we add its vertex set~$V'$ plus the vertex~$v \in W$ corresponding to the layer in which~$G'$ is contained as a set $V' \cup \{v\}$ to~$\F$. Integer~$b$ is set to~$|V| - k$ and integer~$w$ to $t - \ell$. As mentioned, the reduction works in polynomial time. Since we have to ``hit'' each forbidden induced subgraph by either deleting a vertex from it, or deleting its layer completely, it is not hard to verify that the reduction is correct.

  We now apply the so-called sunflower kernelization procedure~\cite{Mos09PHD,Kra09,Bev14} to the resulting \TCHS{} instance. A \emph{sunflower} in~$\F$ is a subfamily~$\F' \subseteq \F$ such that there is a set $C \subseteq B \cup W$ with the property that each pair $F, F' \in \F'$ has $F \cap F' = C$. The \emph{size} of a sunflower is~$|\F'|$. If there is a sunflower of size~$b + w + 1$ in $\F$, then every hitting set contains at least one element of~$C$. Hence, we can safely remove one set out of every sunflower of size at least~$b + w + 2$. This can be done exhaustively in polynomial time~\cite{Mos09PHD,Kra09,Bev14}. After this procedure has been carried out, Erd\H{o}s and Rado's 
  Sunflower Lemma~\cite{ER60}  guarantees that the remaining set family~$\F$ has size $O((b + w)^c)$, where $c$ is the size of the largest set in~$\F$. This is a polynomial because the sets in $\F$ have constant size. By removing elements of $B \cup W$ which are not contained in any set in~$\F$, we obtain an overall size bound on the resulting instance of \TCHS{} which is polynomial in $b = |V| - k$ and $w = t - \ell$. 

  Finally, we transfer the instance of \TCHS{} created in this way to an
  equivalent instance of \PML{} by using a polynomial-time many-one
  reduction. Such a reduction exists because \TCHS{} is in \NP{} and \PML{}
  is \NP-hard for every graph property $\Pi$ that is characterizable by a finite
  number of forbidden induced subgraphs~\cite{lewis1980node}.
\end{proof}
In the following corollary, we give several hereditary properties $\Pi$ that are
characterized by a finite number of forbidden subgraphs. For these,
\PML is fixed-parameter tractable with respect to the
  number~$t - \ell$ of layers to delete and the number~$|V| - k$ of vertices to
  delete combined. For their definitions see Appendix~\ref{appendix:definitions}.
  \begin{cor}
    Let $\Pi \in \{$\gprop{Cluster Graph}, \gprop{Cograph},
    \gprop{Line Graph}, \gprop{Quasi-Threshold Graph}, \gprop{Split Graph}$\}$.  
    Then, \PML is \NP-hard. When parameterized by the combined parameter number~$t - \ell$ of layers delete and number~$|V| - k$ of vertices to delete, then \PML\ is \FPT\  
and admits a polynomial kernel.
\end{cor}
While this collection of graph properties might seem motivated only from a graph-theoretic point of view, we note that quasi-threshold graphs are a model for familial groups in social networks~\cite{NG13} and split graphs are a model of communities with core/periphery structures~\cite{BE00}.

\section{Non-hereditary Graph Properties}\label{sec:meta}

In contrast to the previous section, we now give two results related to graph
properties that are not necessarily hereditary. Two examples of such non-hereditary properties that are prominent in network analysis are \gprop{Connectivity} and  \gprop{$c$-Core} (recall that graph is a $c$-core if each vertex has degree at least~$c$). First, we consider graph properties in which each graph admits a certain
nice vertex partitioning. This family of graph properties includes \gprop{Connectivity} and \gprop{$c$-Core} among others. We give an \FPT-algorithm for the parameter total
number~$t$ of layers; this algorithm is also an \XP-algorithm with respect
to the number~$\ell$ of subgraph layers. Second, we provide a
general sufficient condition for graph properties~$\Pi$ for which \PML is
\W{1}-hard for the combined parameter~$(k, \ell)$.
This general sufficient condition 
captures
many prominent graph properties such as \gprop{Connectivity}, \gprop{$c$-Core}, \gprop{$c$-Truss}, and \gprop{Matching}.

\paragraph{Vertex-partitionable graph properties.} We start with investigating graph
properties~$\Pi$ that allow for efficiently computable partitions of
a graph into maximal components that each satisfy~$\Pi$.  For such properties~$\Pi$ it turns
out that finding large $\Pi$-subgraphs in all layers of the input graph is
tractable. This can be seen as a generalization of the component-detection
algorithm in two layers by \citet{GHPR03}.

First, we define the type of partition that the graph property shall allow.
\begin{defi}\label{def:partitioning-properties}
  Let~$\Pi$ be a graph property and let~$G=(V,E)$ be a graph. A partition~${\cal P}:=\{X_1, \ldots , X_x\}$ of~$V$ is a~\emph{$\Pi$-partition} if
  \begin{compactenum}[(i)]
  \item $G[X_i]\in \Pi$ for all~$X_i\in {\cal P}$,\label{enum:pp:issol}
  \item for all~$X\subseteq V$ such that~$G[X]\in \Pi$, we
    have~$X\subseteq X_i$ for some~$X_i\in {\cal P}$.\label{enum:pp:contsol}
  \end{compactenum}
\end{defi}
Informally, the existence of a~$\Pi$-partition means that there are maximal components containing all induced subgraphs that fulfill the property~$\Pi$. In the case of~\gprop{Connectivity}, these are exactly the connected components of the graph. We now show that if~$\Pi$-partitions can be computed efficiently, then we can solve \PML efficiently if~$t$,~$\ell$, or~$t-\ell$ are small.
\begin{prop}
\label{thm:partitioning-properties}
  Let~$\Pi$ be a graph property such that every graph~$G=(V,E)$ has a~$\Pi$-partition that can be computed in~$T(|V|,|E|)$ time
    where~$T$ is non-decreasing in both arguments.
  Then, \PML can be solved in~$\binom{t}{\ell} \cdot
  O(|V|\cdot t)\cdot \max_{1\le i\le t} (|E_i| + T(|V|,|E_i|))$ time.
\end{prop}
\begin{proof}
  We describe an algorithm that works when $\ell=t$, that is, when we aim
  for satisfying property~$\Pi$ in all layers.
  To apply this algorithm for all other cases, we iterate through all possibilities to select $\ell$~layers where property~$\Pi$ shall be satisfied
  and then apply the algorithm for this selection.
  This gives an additional factor of $\binom{t}{\ell}$ to the overall running time.
  In each application, our algorithm outputs all maximal sets~$X\subseteq V$
  such that~$G_i[X]\in \Pi$ for all selected input graphs~$G_i$.
  We refer to these sets as \emph{solutions} in the following.
  The algorithm maintains a partition~${\cal P}$ of~$V$ where, initially,~${\cal P}=\{V\}$.
  
  The algorithm checks whether there is a~$Y\in {\cal P}$ such
  that~$G_i[Y]\notin \Pi$ for some input graph~$G_i$.
  If there is such a vertex set~$Y\in {\cal P}$, then it computes
  in~$T(|V|,|E_i|)$ time a~$\Pi$-partition ${\cal P}_Y$ 
  of~$G_i[Y]$. The partition~${\cal P}$ is replaced by~$({\cal
    P}\setminus \{Y\}) \cup {\cal P}_Y$. If there is no such vertex set, then
  the algorithm outputs all~$Y\in {\cal P}$. It accepts if one of the outputs has size at least~$k$ and rejects otherwise.
  
  To see the correctness of the algorithm, first observe that, for
  each output~$Y$, we have that~$G_i[Y]\in \Pi$
  for all selected input graphs~$G_i$. To show maximality of each~$Y$, we show
  that the algorithm maintains the invariant that each solution~$X$ is
  a subset of some~$Y\in {\cal P}$. This invariant is trivially
  fulfilled for the initial partition~$\{V\}$. Now consider a set~$Y$
  that is further partitioned by the algorithm. By the invariant, any
  solution~$X$ that has nonempty intersection with~$Y$ is a subset
  of~$Y$. Since~${\cal P}_Y$ is a~$\Pi$-partition
  of~$G_i[Y]$, by Property~(\ref{enum:pp:contsol}) of $\Pi$-partitions, there is no solution~$X$ that contains vertices of two
  distinct sets~$Y_1$,~$Y_2$ of~${\cal P}_Y$. Thus, each solution that
  is a subset of~$Y$ is also a subset of some~$Y'\in {\cal
    P}_Y$. Hence, each output set~$X$ is a solution as it is an
  element of the final partition~${\cal P}$ and all solutions are
  subsets of elements of~${\cal P}$.

  To upper-bound the running time, observe that for each~$Y\in {\cal P}$, we
  can test in~$O(t\cdot \max_{1\le i\le t} T(|V|,|E_i|))$ time
  whether it needs to be partitioned further. At most~$|V|$ partitioning
  steps are performed and if a set~$Y\in {\cal P}$ does not need to be
  partitioned further, then it can be discarded for the remainder of
  the algorithm. Thus, in~$O(|V|)$ applications of the ``maximality
  test'' the result is that~$Y$ is a solution and in~$O(|V|)$
  applications of the maximality test,~$Y$ is further
  partitioned. Hence, the overall number of sets~$Y$ that are elements
  of~${\cal P}$ at some point is~$O(|V|)$. The overall running time
  now follows from the assumptions on~$T$ and from the fact that
  the induced subgraphs for all~$G_i$ can be computed in~$O(t\cdot
  \max_{1\le i\le t}|E_i|)$ time for each~$Y$.
\end{proof}
Examples of graph properties covered by \autoref{thm:partitioning-properties}
are \gprop{Connectivity}, and \gprop{$c$-Edge-Connectivity}.
If we assume that graphs on one vertex are considered as
(trivial) $c$-cores, then the \gprop{$c$-Core} property is also covered: the
nontrivial~$c$-core of a graph is uniquely determined (it is the
subgraph remaining after deleting any vertex with degree less
than~$c$). Similarly, the~\gprop{$c$-Truss} property is covered by \autoref{thm:partitioning-properties} if we
allow one-vertex graphs to be considered as~$c$-trusses. Observe that
we can also require the~$c$-cores and~$c$-trusses to be connected. For definitions of the graph properties mentioned above and in the following corollary, see Appendix~\ref{appendix:definitions}.

If~$T$ is a polynomial function, which holds for all examples described above,
then~\PML is fixed-parameter tractable with respect to~$t$ and polynomial
time solvable if~$\ell$ or~$t-\ell$ are constants.

\begin{cor}
  \label{cor:partitioning-fpt}
  Let $\Pi \in
 \{$\gprop{$c$-Core}, \gprop{$c$-Edge-Connectivity}, \gprop{Connectivity}, \gprop{$c$-Truss}$\}$.
 \PML is \FPT when parameterized by the total number~$t$ of layers and
 polynomial-time solvable if the number~$\ell$ of layers to select or
 the number~$t-\ell$ of layers to delete are constants.
\end{cor}

\paragraph{A general hardness reduction.} We now describe a large class of graph properties~$\Pi$ for
which \PML is \NP-hard and \W{1}-hard when parameterized by the combined parameter number~$k$ of vertices to select and number~$\ell$ of layers to select.
We call these graph properties \emph{staggered} (\autoref{def:staggered}). The definition is somewhat technical but covers many natural
graph properties~$\Pi$ which are not hereditary, such as \gprop{Connectivity}. 
In the main theorem of this paragraph (\autoref{thm:meta}) we show that for staggered graph properties $\Pi$, \PML is \NP-hard and \W{1}-hard when parameterized by the combined parameter number~$k$ of vertices to select and number~$\ell$ of layers to select.
Furthermore, since all graph properties from
\autoref{cor:partitioning-fpt} are staggered, it shows that for those properties \PML 
becomes intractable when parameterized by~$\ell$ instead of~$t$. We list some graph properties~$\Pi$ for which \autoref{thm:meta} can be applied in
\autoref{cor:hardness} and explain how the theorem yields the results in said cases.

Intuitively, \autoref{thm:meta} covers graph properties~$\Pi$ for which it is possible to construct graphs~$G$ with the following properties:
\begin{compactitem}
\item The vertices are partitionable in three sets: a fixed number of \emph{obligatory} vertices, a variable number of \emph{optional} vertices, and a variable number of \emph{forbidden} vertices.
\item A sufficiently large subgraph of~$G$ has property~$\Pi$ if and only if it contains all obligatory vertices and no forbidden vertices.  
\end{compactitem}
Formally, we characterize these graph properties~$\Pi$ as follows.
\begin{defi}\label{def:staggered}
Let $\Pi$ be a graph property and~$f, g: \mathds{N}\to\mathds{N}$ two polynomial-time computable functions. We say that~$\Pi$ is $(f,g)$-\emph{staggered} if there is an
algorithm $A_{\Pi}$ and a constant~$c_{\Pi}$ depending only on $\Pi$ such that~$A_{\Pi}$ takes as input a set $W$, a subset $W'
\subseteq W$, and an integer $\alpha \ge c_{\Pi}$, runs in $(|W|+\alpha)^{O(1)}$ time, and computes a graph $G=(V, E)$ 
fulfilling the following conditions.
\begin{compactenum}
  \item For each $w \in W$ there is a vertex set $X_w \subseteq V$ with $|X_w| =
  f(\alpha)$,
  \item $\{X_w\mid w\in W\}\cup \{Y\}$ is a partition of~$V$ for some $Y$ with $|Y| = g(\alpha)$,
  \item for all $X \subseteq V$ with $|X|\geq \alpha\cdot
  f(\alpha) + g(\alpha)$ we have that
  \[ 
  G[X] \in \Pi \ \Leftrightarrow\  \exists W'' \subseteq W'\text{ such that } X = \bigcup_{w\in W''} X_w \cup Y.
  \]
\end{compactenum}
The output of Algorithm~$A_{\Pi}$ is the graph $G=(V, E)$ as well as the partition of its vertex set $\{X_w\mid w\in W\}\cup \{Y\}$.
\end{defi}
The intuition is that each set $X_w$ corresponds to one vertex $w\in
W$ and every sufficiently large set $X$ such that~$G[X]\in \Pi$ either fully
contains~$X_w$ or not. Furthermore, $Y$ is the set of obligatory vertices that have to
be included in $X$ in order to have that $G[X]\in \Pi$. Finally, the sets~$X_w$ that correspond to vertices in~$w\in W\setminus W'$ are forbidden, that is, they have to be
fully excluded from $X$ in order to have that $G[X]\in \Pi$.  For the
proof, we reduce from \textsc{Biclique}, which is~\NP-hard~\cite{John87a} and~\W{1}-hard when parameterized by the size~$h$ of the
biclique~\cite{lin2015parameterized}.
\begin{theorem}
\label{thm:meta}
Let $\Pi$ be an $(f,g)$-staggered graph property. Then,~\PML is \NP-hard and \W{1}-hard
when parameterized by the combined parameter number~$k$ of vertices to select and number~$\ell$ of layers to select.
\end{theorem}

\begin{proof}
We give a parameterized reduction from
\textsc{Biclique} which, given an undirected graph~$H$ and a positive integer~$h$,
asks whether $H$~contains a $2h$-vertex \emph{biclique}, a complete bipartite subgraph~$K_{h, h}$ in which both partite sets have size~$h$. 

\emph{Reduction idea.}\quad The main idea is to construct a layer for each vertex $v$ of the input graph using Algorithm $A_\Pi$ from \autoref{def:staggered} such that the optional vertices correspond to the open neighborhood of $v$ in the input graph. Then we can show that selecting layers corresponds to selecting one half of the biclique and the optional vertices that are part of the selected subgraph form the second half of the biclique.

Let~$(H=(U, F),h)$ be an instance of \textsc{Biclique} and assume $h \ge 2$ without loss of generality.
Let $A_{\Pi}, f, g$ be the algorithm and functions promised by the definition of being $(f,g)$-staggered. We construct an instance of \PML in the following way.

For all $v \in U$, let $N_H(v)$ be the neighborhood of $v$ with respect to $H$.
Run Algorithm~$A_{\Pi}$ (see \autoref{def:staggered}) on input $(U, N_H(v), h)$ to create graphs $G_v$ for each $v
\in U$. A graph~$G_v$ created that way has a vertex set $V_v = \bigcup_{u\in U} X_u \cup Y$. For all created graphs, we identify the vertices of the sets $X_u$ in an arbitrary but fixed fashion, the same for vertices in $Y$. This allows us to say that all graphs $G_v$ are defined over the same vertex set. Set $k := h\cdot f(h)+g(h)$ and~$\ell := h$. Note that the parameters~$k$ and~$\ell$ only depend on the solution size~$h$ of \textsc{Biclique}. By~\autoref{def:staggered}, this procedure runs in polynomial time. Now we show that
$(\{G_v \mid v \in U\}, k, \ell)$ is a yes-instance of \PML if and only if $(H, h)$ is
a yes-instance of \textsc{Biclique}.

$(\Rightarrow)$: Assume that $(H, h)$ is a yes-instance of \textsc{Biclique} and
let $(C, D)$ with $C, D \subseteq U$ and~$|C|=|D|=h$ represent a biclique. Then we set $X :=
\bigcup_{v\in C} X_v \cup Y$, where $Y$ and the~$X_v$ are the vertex sets promised by Conditions~1 and~2 of \autoref{def:staggered}. Note that $|X| = h\cdot f(h)+g(h) = k$. Furthermore, for all $v' \in D$ and all $v\in C$ such that
$X_v \subset X$, it holds that $v\in N_H(v')$. Hence, by Condition~3 of \autoref{def:staggered}, $G_{v'}[X]
\in \Pi$. It follows that the number of layers~$G_v$ with~$G_{v}[X] \in \Pi$ is at least~$h = \ell$. Consequently,~$(X, D)$ is a solution of \PML.

 $(\Leftarrow)$: Assume that $(\{G_v \mid v \in U\}, k, \ell)$ is a yes-instance of \PML.
Then we know that there are graphs $G_i$, with $i \in L$, $L \subseteq U$, $|L| \geq h$,
and a vertex set $X \subseteq V$ with $|X| \geq k$, such that $G_i[X] \in \Pi$
for all $i \in L$. By the construction of $G_i$ (Conditions~1,~2, and~3 of \autoref{def:staggered}), we know that $X =
\bigcup_{v\in W'} X_v \cup Y$ for some $W' \subseteq U$ with $|W'|\geq h$.
Furthermore, we know that if $i \in L$ then for all $j\in W'$ (that is $X_j
\subset X$) we have that $i$ is a neighbor of $j$. Lastly, $i\in L$ implies that $X_i
\not\subset X$ and hence $i\notin W'$. Hence, we have that $(L, W')$ is a
biclique in $H$ with $|L| \geq h$ and $|W'|\geq h$.
\end{proof}

In the following corollary, we present several graph properties~$\Pi$ for which the single-layer case $\Pi$-\textsc{Subgraph} is
polynomial-time solvable but by application of \autoref{thm:meta} \PML{} is \NP-hard and \W{1}-hard
when parameterized by the combined parameter number~$k$ of vertices to select and number~$\ell$ of subgraph layers.  The proof of \autoref{cor:hardness} simply consists of the description of Algorithm~$A_{\Pi}$ of \autoref{thm:meta} for those graph properties. For the definitions of the graph properties appearing in \autoref{cor:hardness} see Appendix~\ref{appendix:definitions}.
\begin{cor}
\label{cor:hardness}
\PML is \NP-hard and \W{1}-hard when parameterized by the 
combined parameter number~$k$ of vertices to select and number~$\ell$ of layers to select for $\Pi \in
\{$\gprop{$c$-Con\-nectivity}, \gprop{$c$-Core}, \gprop{$c$-Factor}, \gprop{Con\-nectivity},  \gprop{$c$-Truss}, \gprop{Hamiltonian}, \gprop{Matching}, \gprop{Star}, \gprop{Tree}$\}$.
\end{cor}

\begin{proof}
For each of the listed properties~$\Pi$, we show that $\Pi$ is~$(f,g)$-staggered for some functions~$f$ and~$g$. To this end, we describe the polynomial-time algorithm~$A_\Pi$ with inputs~$W$, $W'\subseteq W$, and~$\alpha$. \autoref{thm:meta} then yields \NP-hardness and~\W{1}-hardness of \PML{} when parameterized by the combined parameter number~$k$ of vertices to select and number~$\ell$ of subgraph layers. \autoref{fig:ccore} shows the graph constructed by Algorithm~$A_{\Pi}$ for $\Pi \in\{$\gprop{$c$-Core}, \gprop{$c$-Connectivity}$\}$ with $c=3$.
\begin{figure}[t]
\begin{center}
    \begin{tikzpicture}[ scale=1.5]

\node at (0,1.5) {obligatory vertices $Y$};
\node at (0,-.5) {optional vertices $\bigcup_{v\in W'} X_v$};
\node at (0,-2.2) {forbidden vertices $\bigcup_{v\in W\setminus W'} X_v$};

      \draw[rounded corners,dashed,color=gray, fill=white!50!green] (-2.5,0.7) rectangle (2.5,1.3);
      \draw[rounded corners,dashed,color=gray, fill=white!50!yellow] (-2.5,-.3) rectangle (2.5,0.3);
      \draw[rounded corners,dashed,color=gray, fill=white!60!red] (-2.5,-2) rectangle (2.5,-.9);

    \node[vert] (o1) at (-1,1) {};   
    \node[vert] (o2) at (0,1) {};   
    \node[vert] (o3) at (1,1) {};
    \node[vert] (n1) at (-2,0) {}; 
    \node[vert] (n2) at (-1,0) {}; 
    \node[vert] (n3) at (0,0) {}; 
    \node[vert] (n4) at (1,0) {};
    \node[vert] (n5) at (2,0) {}; 
    \node[vert] (v1) at (-2,-1.2) {}; 
    \node[vert] (v2) at (-1,-1.2) {};
    \node[vert] (v3) at (0,-1.2) {};
    \node[vert] (v4) at (1,-1.2) {};
    \node[vert] (v5) at (2,-1.2) {};
    \node[vert] (v6) at (-1.5,-1.7) {}; 
    \node[vert] (v7) at (-.5,-1.7) {};
    \node[vert] (v8) at (0.5,-1.7) {};
    \node[vert] (v9) at (1.5,-1.7) {};
    
    \draw (o1) -- (n1);
    \draw (o1) -- (n2);
    \draw (o1) -- (n3);
    \draw (o1) -- (n4);
    \draw (o1) -- (n5);
    \draw (o2) -- (n1);
    \draw (o2) -- (n2);
    \draw (o2) -- (n3);
    \draw (o2) -- (n4);
    \draw (o2) -- (n5);
    \draw (o3) -- (n1);
    \draw (o3) -- (n2);
    \draw (o3) -- (n3);
    \draw (o3) -- (n4);
    \draw (o3) -- (n5);


    \end{tikzpicture}
    \end{center}
    \caption{Visualization of graph constructed by Algorithm~$A_{\Pi}$ for $\Pi \in\{$\gprop{$c$-Core}, \gprop{$c$-Connectivity}$\}$ with $c=3$.}
    \label{fig:ccore}
\end{figure}

\noindent \emph{Algorithm~$A_{\Pi}$ for $\Pi \in\{$\gprop{Connectivity}, \gprop{Tree}, \gprop{Star}, \gprop{$1$-Core}$\}$:}
  
  We construct the graph $G=(V,E)$ as follows. Let $X_w:=\{w\}$ for all $w\in W$ and $Y:=\{u\}$ and hence $V=W\cup\{u\}$. Add an edge $\{u, w\}$ to the edge set~$E$ for each vertex $w
  \in W'$. This clearly fulfills Conditions~1 and~2 from \autoref{def:staggered} for $f(\alpha)=g(\alpha)=1$. Note that the graph $G$ contains a star with~$u$ as center and all vertices $w\in W'$ as leaves and all other vertices are isolated. Furthermore, it is easy to see that no graph that satisfies one of the properties \gprop{Connectivity}, \gprop{Tree}, \gprop{Star}, and \gprop{$1$-Core} can contain an isolated vertex. Hence, each subgraph of~$G$ that satisfies one of those properties is a subtree of the star in~$G$ and therefore has to contain the center~$u$. It follows that Condition~3 from \autoref{def:staggered} is fulfilled. 
  
\noindent \emph{Algorithm~$A_{\Pi}$ for $\Pi \in\{$\gprop{$c$-Core}, \gprop{$c$-Connectivity}$\}$ with $c>1$:}
  
  We construct the graph $G=(V,E)$ as follows. Let $X_w:=\{w\}$ for all $w\in W$ and $Y:=\{u_1, \ldots, u_c\}$ and hence $V=W\cup\{u_1, \ldots, u_c\}$. Add edges $\{u, w\}$ to~$E$ for all vertices $u \in Y$ and $w \in W'$. This clearly fulfills Conditions~1 and~2 from \autoref{def:staggered} for $f(\alpha)=1$ and $g(\alpha)=c$. Note that the graph $G$ contains a complete bipartite subgraph where one part is~$Y$ and the other is~$W'$ and all other vertices are isolated. Furthermore, it is easy to see that no graph that satisfies one of the properties \gprop{$c$-Core} and \gprop{$c$-Connectivity} can contain an isolated vertex. It follows that every subgraph of $G$ that satisfies one of those properties has to be a subgraph of the complete bipartite subgraph in~$G$. In order to be a $c$-core or $c$-connected, a complete bipartite graph has to have at least~$c$~vertices in each part. Since we have that $|Y|=c$, all vertices of $Y$ have to be contained in each subgraph of $G$ satisfying one of the properties \gprop{$c$-Core} and \gprop{$c$-Connectivity} and we get that Condition~3 from \autoref{def:staggered} is fulfilled. The resulting graph is visualized in \autoref{fig:ccore}.
  
\noindent \emph{Algorithm~$A_{\Pi}$ for $\Pi = c\gprop{-Truss}$:}
  
  We construct the graph $G=(V,E)$ as follows. Let $X_w:=\{w\}$ for all $w\in W$ and $Y:=\{u_1, \ldots, u_{c-1}\}$ and hence $V=W\cup\{u_1, \ldots, u_{c-1}\}$. Add edges $\{u, w\}$ to~$E$ for all vertices $u \in Y$ and $w \in W' \cup Y$ with $u \neq w$. This clearly fulfills Conditions~1 and~2 from \autoref{def:staggered} for $f(\alpha)=1$ and~$g(\alpha)=c-1$. Note that the vertices in~$Y$ form a clique of size~$c-1$. Furthermore, every vertex from $W'$ is connected exactly to all vertices in~$Y$ and hence every triangle in~$G$ contains at least two vertices from~$Y$ and for every $w\in W'$ we have that $Y\cup\{w\}$ forms a clique of size~$c$ and hence a $c$-truss. It follows that for each $W''\subseteq W'$ with $W''\neq \emptyset$ we have that $G[W''\cup Y]$ is a $c$-truss. Since vertices in $W'$ are not connected to each other, all vertices in $Y$ are necessary to produce enough triangles for the edges going from $W''$ to $Y$. Also, we cannot add any isolated vertices to the subgraph since we require $c$-trusses to be connected. It follows that Condition~3 from \autoref{def:staggered} is fulfilled.
  
\noindent \emph{Algorithm~$A_{\Pi}$ for $\Pi \in\{$\gprop{Matching}, \gprop{$c$-Factor}$\}$:}

    Recall that $\gprop{Matching}=\gprop{$1$-Factor}$. Hence, we describe the algorithm only for the more general \gprop{$c$-Factor} property.
  We construct the graph $G=(V,E)$ as follows. (For \gprop{Matching} set $c=1$.) Let $X_w:=\{w_1, \ldots, w_{c+1}\}$ for all $w\in W$ and~$Y:=\emptyset$.
  Add all edges $\{w_i, w_j\}$ with $1\le i < j \le c+1$ to $E$ for all $w\in W'$.
  This clearly fulfills Conditions~1 and~2 from \autoref{def:staggered} for $f(\alpha)=c+1$ and $g(\alpha)=0$. Note that for each $w\in W'$, $G[X_v]$ is a complete graph of size $c+1$ and hence a connected $c$-regular graph. Furthermore, for any set $X'\subset X_w$ and any set $X'' \subseteq V \setminus X_w$, we have that $G[X']$ is a connected component of $G[X'\cup X'']$ with a minimum degree strictly smaller than $c$. Hence, $G[X'\cup X'']$ does not have a $c$-factor. Together with the fact that all vertices in the sets~$X_w$ with $w\in W\setminus W'$ are isolated, it follows that if there is a vertex set $X$ such that $G[X]$ has a $c$-factor, then $X = \bigcup_{w\in W''}X_v$ for some~$W''\subseteq W'$. Hence, we get that Condition~3 from \autoref{def:staggered} is fulfilled. 
  
\noindent \emph{Algorithm~$A_{\Pi}$ for $\Pi = \gprop{Hamiltonian}$:}
  
  We construct the graph $G=(V,E)$ as follows. Let $X_w:=\{w\}$ for all $w\in W$ and $Y:=\emptyset$ and hence $V=W$. Add edges $\{u, v\}$ to~$E$ for all vertices $u, v \in W'$. This clearly fulfills Conditions~1 and~2 from \autoref{def:staggered} for $f(\alpha)=1$ and $g(\alpha)=0$. Note that the vertices $W'$ form a clique in~$G$ and all other vertices are isolated. Furthermore, we have that any clique is also a Hamiltonian subgraph and any Hamiltonian subgraph cannot contain any isolated vertices. Hence, we get that Condition~3 from \autoref{def:staggered} is fulfilled.

The results follow from the existence of the described algorithms.
\end{proof}
A particular consequence of \autoref{cor:hardness} is that, while the polynomial-time solvability of the
connected component detection algorithm for two layers by \citet{GHPR03} generalizes to any constant number of layers (\autoref{thm:partitioning-properties}), it does not generalize to an arbitrary, given number of layers.

\section{Matching and $c$-Factors}\label{sec:mlmatching}

In this section we consider the graph properties~\gprop{Matching} and
its generalization~\gprop{$c$-Factor}. To recall, a
graph~$G$ has property \gprop{Matching} if it contains a perfect
matching and $G$ has property \gprop{$c$-Factor} if it has a
$c$-regular subgraph containing all vertices of~$G$. Finding a
maximum $c$\nobreakdash-factor for a given graph is polynomial-time solvable for all~$c$ (see
\citet{Plu07} or \citet[Chapter~3]{Nic14}, for an overview on graph factors).
For these properties, \autoref{thm:meta}
shows that \PML\ is \W{1}-hard with respect to the number~$k$ of vertices and the
number~$\ell$ of layers to select. \autoref{thm:meta} does
not rule out fixed-parameter tractability with respect to the total
number~$t$ of layers, or with respect to the number~$k$ of vertices to
select if the number~$\ell$ of layers to select is constant. In this section, however, we show
that both \MatchML and \gprop{$c$-Factor}-\ML{} are hard even for a
constant number of layers, thus strengthening the statement of
\autoref{thm:meta} for these properties.

\paragraph{Matching.} As mentioned, through \autoref{thm:meta} we get in \autoref{cor:hardness} in \autoref{sec:meta} that \MatchML is
\W{1}-hard when parameterized by the combined parameter number~$k$ of vertices to select
and number~$\ell$ of layers to select.
Through closer inspection we can get a stronger result. We show that \MatchML is polynomial-time solvable
for~$\ell\le2$ and becomes \W{1}-hard when parameterized by the number~$k$ of vertices to select already for~$\ell\ge3$. Intuitively, the reason for the computational
complexity transition from two layers to three layers is as follows. 
By overlaying two matchings one may create cycles and paths
but without connections between them. We can cope with this by
finding a maximum weighted matching in an auxiliary graph. Adding a
third layer, however, allows arbitrary connections between cycles and
paths, which allows the construction of gadgets to show hardness.

We now reduce \gprop{Matching}-\ML\ with $\ell=2$ layers to
\textsc{Maximum Weight Matching}, that is, to the problem where we are
given a graph with edge weights and ask to find a matching with maximum edge
weights. To this end, let $G_1=(V,E_1)$ and $G_2=(V,E_2)$ be the two
layers of the input graph for which we would like to know whether
there is an $X\subseteq V$ of size at least $k$ such that
both~$G_1[X]$ and~$G_2[X]$ have a perfect matching. The reduction is
given in the following lemma and visualized in \autoref{fig:matching}.

\begin{figure}[t]
\begin{center}
    \begin{tikzpicture}[line width=1pt, scale=1.6]   
    \node[vert] (A2) at (2,1) {}; 
    \node[vert] (B2) at (3,1) {};
    \node[vert] (C2) at (1,0) {};
    \node[vert] (D2) at (2,0) {};
    \node[vert] (E2) at (3,0) {};
    \node[vert] (F2) at (1,-1) {};
    \node[vert] (G2) at (2,-1) {};
    \node (graph1) at (2, -1.5) {$G_2$};
    
    \node[vert] (A1) at (-2,1) {}; 
    \node[vert] (B1) at (-1,1) {};
    \node[vert] (C1) at (-3,0) {};
    \node[vert] (D1) at (-2,0) {};
    \node[vert] (E1) at (-1,0) {};
    \node[vert] (F1) at (-3,-1) {};
    \node[vert] (G1) at (-2,-1) {};
    \node (graph1) at (-2, -1.5) {$G_1$};
    
    \path (A1) edge[dashed,color=gray, bend left] (A2);
    \path (B1) edge[dashed,color=gray, bend left] (B2);
    \path (C1) edge[dashed,color=gray, bend left] (C2);
    \path (D1) edge[dashed,color=gray, bend left] (D2);
    \path (E1) edge[dashed,color=gray, bend left] (E2);
    \path (F1) edge[dashed,color=gray, bend left] (F2);
    
	\path (G1) edge[dashed,color=gray, bend left, line width=3pt] (G2);
    
    \draw (A1) -- (B1);
    \draw (B1) -- (D1);
    \draw (C1) -- (G1);
    \draw (D1) -- (E1);
    \draw (E1) -- (G1);
    
    \path (A2) edge (D2);
    \path (C2) edge (D2);
    \path (D2) edge (G2);
    \path (F2) edge (G2);

    \path (A1) edge[line width=3pt] (C1);
    \path (B1) edge[line width=3pt] (E1);
    \path (D1) edge[line width=3pt] (F1);
    
    \path (A2) edge[line width=3pt] (B2);
    \path (C2) edge[line width=3pt] (F2);
    \path (D2) edge[line width=3pt] (E2);

    \end{tikzpicture}
\end{center}
\caption{Construction of the graph $G'=(V',E')$ from graphs $G_1=(V, E_1)$
and $G_2=(V, E_1)$.
Black edges have weight $|V|+1$ and gray dashed edges have weight $|V|$. The
thick edges are a maximum-weight matching for $G'$.}
\label{fig:matching}
\end{figure}

\begin{lem}\label{lemma:reductiontomaxweightmatching}
Given two graphs $G_1=(V,E_1)$ and $G_2=(V,E_2)$, define a graph $G'=(V',E')$ as
follows:
\begin{compactitem}
 \item $V'=\{v_1,v_2\mid v\in V\}$ and
 \item $E'=\{\{v_1,v_2\}\mid v\in V\}\cup \{\{u_i,v_i\}\mid \{u,v\}\in E_i\}$.
\end{compactitem}
\smallskip
Define a weight function $w\colon E'\to\NN$ as follows; let $n:=|V|$:
\begin{align*}
w(\{u_i,v_j\})=\begin{cases}
               n & \mbox{if $i\neq j$ and $u=v$,}\\
               n+1 & \mbox{if $i=j$ (and $u\neq v$).}
               \end{cases}
\end{align*}
Let $k\in\NN$. Then, there is a set $X\subseteq V$ of size at least $k$ such that
both $G_1[X]$ and $G_2[X]$ have a perfect matching if and only if the graph $G'$ has
a matching of $w$-weight at least $n^2+k$.
\end{lem}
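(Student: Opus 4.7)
The plan is to prove both directions of the equivalence. For $(\Rightarrow)$, given $X \subseteq V$ with $|X| \ge k$ admitting perfect matchings $M_1$ of $G_1[X]$ and $M_2$ of $G_2[X]$, I lift these to a perfect matching $M'$ of $G'$ by including the layer-$1$ edges $\{u_1, v_1\}$ for each $\{u, v\} \in M_1$, the layer-$2$ edges $\{u_2, v_2\}$ for each $\{u, v\} \in M_2$, and the crossing edges $\{v_1, v_2\}$ for each $v \in V \setminus X$. One verifies that every copy of every vertex is covered exactly once and computes $w(M') = |X|(n+1) + (n-|X|)n = n^2 + |X| \ge n^2 + k$.

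For the harder direction $(\Leftarrow)$, given $M'$ with $w(M') \ge n^2 + k$, I assume without loss of generality that $M'$ has maximum $w$-weight. I classify each $v \in V$ by the roles of its two copies in $M'$: let $A$ be the vertices where $v_1$ is in a layer-$1$ edge and $v_2$ is in a layer-$2$ edge; $Y$ those with $\{v_1, v_2\} \in M'$; and $B_1$ (respectively $B_2$) those where $v_1$ (respectively $v_2$) is in a layer edge while the other copy is free. The key structural claim is that a maximum-weight $M'$ has $B_1 = B_2 = \emptyset$. Granting this, the layer-$1$ edges of $M'$ form a perfect matching of $G_1[A]$ and the layer-$2$ edges form one of $G_2[A]$; counting matched copies of $v_1$ yields $|Y| + |A| \le n$, and thus $w(M') = n|Y| + (n+1)|A| = n(|Y|+|A|) + |A| \le n^2 + |A|$, giving $|A| \ge k$, so I may take $X := A$.

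The main obstacle is the structural claim $B_1 = B_2 = \emptyset$; by symmetry I argue only $B_1 = \emptyset$. Suppose for contradiction that $v \in B_1$. I consider the auxiliary multigraph $H$ on vertex set $V$ whose edges are $\{u, w\}$ for each layer edge $\{u_i, w_i\} \in M'$. Since the layer-$1$ and layer-$2$ edges each form matchings on $V$, $H$ has maximum degree $2$ and decomposes into even cycles (entirely in $A$) and paths whose endpoints lie in $B_1 \cup B_2$. The vertex $v$ lies at one endpoint of such a path $P = v^{(0)}, v^{(1)}, \ldots, v^{(p)}$ whose edges alternate between the two layers. The improving exchange replaces the $p$ layer edges of $P$ in $M'$ with the $p+1$ crossing edges $\{v^{(i)}_1, v^{(i)}_2\}$ for $0 \le i \le p$; this yields a valid matching of $G'$ since the endpoints of $P$ already have one free copy in $G'$ and internal vertices get both copies freed by the removal. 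The weight change is $(p+1)n - p(n+1) = n - p$, and $P$ has $p + 1 \le n$ distinct vertices of $V$, so the change is strictly positive, contradicting the maximality of $w(M')$ and thereby establishing the structural claim.
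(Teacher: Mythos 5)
Your proof is correct, but the backward direction takes a genuinely different route from the paper's. The forward direction (lifting $M_1,M_2$ and padding with crossing edges) is identical. For the converse, the paper does not need any maximality assumption or exchange argument: it observes that any matching $M'$ with $w(M')\ge n^2+k$ must already be a \emph{perfect} matching of $G'$, because a matching with at most $n-1$ edges has weight at most $(n+1)(n-1)=n^2-1<n^2+k$; once $M'$ is perfect, every vertex class except your $A$ and $Y$ is vacuously empty, the set $X:=V\setminus Y$ immediately inherits perfect matchings in both layers, and the bound $|X|\ge k$ follows from the same weight count you use. Your route instead passes to a maximum-weight matching, projects the layer edges to an auxiliary degree-$\le 2$ multigraph on $V$, and rules out the ``half-matched'' classes $B_1,B_2$ by an alternating-path exchange whose weight gain is $n-p>0$ since the path has at most $n$ vertices; this is valid (I checked that the second endpoint of the path also has a free copy, and that the swap weight is computed correctly), and it has the mild virtue of exposing \emph{why} the weights $n$ versus $n+1$ are calibrated as they are, but it is considerably more work than the paper's one-line counting argument. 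Two small points of hygiene: your case analysis of a vertex's two copies omits the class where both copies are unmatched (a fifth possibility besides $A$, $Y$, $B_1$, $B_2$); this does not break anything, since such vertices contribute nothing to $w(M')$ and the inequality $|Y|+|A|\le n$ still holds, but you should either list the class and note it is harmless or observe that maximality excludes it (add the crossing edge). Also, $|Y|+|A|$ is in fact equal to $n$ once all other classes are empty, though the inequality is all you need.
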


\begin{proof}
Assume that $G'$ has a matching $M'\subseteq E'$ with $w(M')\geq n^2+k$. Let us
first check that $M'$ is in fact a perfect matching: If $|M'|\leq n-1$, then
$w(M')\leq (n+1)(n-1)=n^2-1<n^2+k$, which would contradict the choice of $M'$. Thus $|M'|\geq n$ but then $M'$ must have exactly $n$ edges since $G'$ has $2n$ vertices. That is,~$M'$ is a perfect matching. Now we show how to get a vertex set $X\subseteq V$ such that both $G_1[X]$ and $G_2[X]$ have perfect matchings. Let
\(
Y:=\{v\mid v\in V \wedge \{v_1,v_2\}\in M'\},
\)
that is, $Y$ is the set of vertices of $V$ whose copies in $G'$ are matched
to each other under $M'$. Let $X:= V\setminus Y$. It can be easily checked that both $G_1[X]$ and $G_2[X]$ have perfect matchings; we show this for $G_1[X]$: For any $v\in X$ we know that $\{v_1,v_2\}\notin M'$, or else we would have $v\in Y$ and $v\notin X$. Thus, using that $M'$ is a perfect matching, $v_1$ must be matched to another vertex~$u_1$, which is then also in $X$, by definition. It follows that $M'$ induces a perfect matching on $G'[X_1]$ where $X_1:=\{v_1\mid v\in X\}$. Since $G_1[X]$ is an isomorphic copy of $G'[X_1]$ under the canonical isomorphism $\phi\colon v\mapsto v_1$, we get that $G_1[X]$ also has a perfect matching.

It remains to check that $X$ has size at least $k$: Observe that $|X|+|Y|=n$ since every vertex of $V$ is either in $X$ or in $Y$. Each $v\in Y$ corresponds to a matching edge $\{v_1,v_2\}\in M'$ which has weight $n$ under $w$. Thus, if $|X|<k$, then $|Y|>n-k$, which implies that $w(M')\leq |X|\cdot(n+1)+|Y|n< kn + k + (n-k)n=n^2+k$, contradicting the choice of $M'$.

Assume now that both $G_1[X]$ and $G_2[X]$ have perfect matchings $M_1$ and $M_2$ for some~$X$ of size at least $k$. Clearly, the size of $X$ must be even. Define a matching $M'$ of $G'$ by
\[
M':=\{\{v_1,v_2\}\mid v\in V\setminus X\}\cup\{\{u_i,v_i\} \mid \{u,v\}\in M_i\}.
\]
In other words, $M'$ is obtained by copying $M_1$ and $M_2$ to $G'$ in the obvious way and matching all leftover vertices by the edges between the copies of the same vertex.

Clearly, for each vertex $v\in V\setminus X$ this adds an edge $\{v_1,v_2\}$ of weight $n$ to $M'$. From~$M_1$ and~$M_2$ we copied $\frac{|X|}{2}$ edges each, which results in exactly $|X|\geq k$ edges of weight~$n+1$ in~$M'$. Thus, the total weight of~$M'$ is $n^2+k$, as claimed.
\end{proof}

To show that \MatchML remains \W{1}-hard for any $\ell\ge3$ when parameterized by $k$, we reduce from \MCC which is known to be \W{1}-hard when
parameterized by the solution size~\cite{FHRV09}. 

\begin{theorem}
\label{thm:matching}
\MatchML can be solved in polynomial time if~$\ell\le2$.
It is \NP-hard and \W{1}-hard when parameterized by the number of vertices to select~$k$ for all~$\ell\ge3$ and total numbers of layers~$t\ge\ell$.
\end{theorem}

\begin{proof}

We get the polynomial-time solvability of \MatchML for the case~$\ell\leq 2$ from \autoref{lemma:reductiontomaxweightmatching}. For the case of
$\ell\geq 3$ we give a parameterized reduction from \MCC. 
 In \MCC, we are given an $h$-partite graph $H=(U_1\uplus \ldots\uplus U_h,F)$ and
need to determine whether it contains a clique of size $h$.
Note that such a clique necessarily contains exactly one vertex from each set
$U_i$ and cliques of more than $h$ vertices are impossible. We say that
vertices from $U_i$ have \emph{color}~$i$. 
 
\emph{Reduction idea.}\quad The main idea is to create the first two layers in such a way that selecting a subgraph that admits a perfect matching in both layers corresponds to selecting exactly one vertex of each color. The third layer is constructed in a way that each subgraph that admits perfect matchings in the first two layer admits also a perfect matching in this layer if all selected vertices are pairwise connected and hence form a multicolored clique.

Without loss of
generality, we assume that the number~$h$ of colors is even.
 We construct an instance of \MatchML for $t=\ell=3$ as follows
 and then argue that the construction is easily generalizable to larger numbers of layers.
 
 \emph{Vertices.}\quad
 First, create $h-1$ vertices for each vertex in graph~$H$ (one vertex for each color other than its own color).
 Formally, for each color $1 \le j \le h$ and each~$u_i \in U_j$,
 create the vertex set $V_i$ consisting of the vertices~$v_{(i,j')}$, $j' \in
 (\{1,\dots,h\} \setminus \{j\})$.
 Second, create one \emph{color vertex} $w_j$ for each color~$j \in
 \{1,\dots,h\}$. We denote the set of color vertices as~$W:=\bigcup_{1 \le j \le
 h}\{w_j\}$.

\emph{Vertex selection gadget by graphs~$G_1$
and~$G_2$.}\quad The vertex selection gadget is intended to make sure the vertices selected in any valid solution of the constructed \MatchML instance correspond to pairwise differently colored vertices of~$H$.
The vertex selection
 gadget is visualized in \autoref{fig:matching:selection}.
We construct it as follows. For each color~$1 \le j \le h$, create for each~$u_i \in
U_j$ one cycle on $\{w_j\} \cup V_i$ in the graph~$G_1 \cup G_2$
 such that the edges are alternatingly from $G_1$ and from $G_2$.
 These $|U_j|$~cycles are all of length~$h$ and share only the color vertex~$w_j$.
 To realize this, create the following edges.
 For each $1 \le z \le h-2$, create an edge in graph~$G_{(z\text{ mod }2)+1}$
 between $v_{(i,z)}$ and $v_{(i,z+1)}$ if $z<j-1$,
 between $v_{(i,z+1)}$ and $v_{(i,z+2)}$ if~$z\ge j$, and
 between $v_{(i,z)}$ and $v_{(i,z+2)}$ if $z=j-1$.
 Create an edge between $w_j$ and $v_{(i,1)}$ in graph~$G_2$,
 between $v_{(i,h)}$ and $w_j$ in graph~$G_1$ if $j\neq h$, and
 between $v_{(i,h-1)}$ and $w_j$ in graph~$G_1$ if $j= h$.
 
\begin{figure}[t]
\begin{center}
    \begin{tikzpicture}[line width=1pt, scale=2]

    \node[vert] (v1) at (-2,.75) [label=90:{$v_{(1,1)}$}]{}; 
    \node[vert] (v2) at (-1,.75) [label=90:{$v_{(1,2)}$}]{};
    \node[vert] (v3) at (0,.75) [label=90:{$v_{(1,3)}$}]{};
    \node[vert] (v4) at (1,.75) [label=90:{$v_{(1,5)}$}]{};
    \node[vert] (v5) at (2,.75) [label=90:{$v_{(1,6)}$}]{};
    \node[vert] (c) at (0,0) [label=-90:{$w_4$}]{};
    \node[vert] (w1) at (-2,-.75) [label=-90:{$v_{(2,1)}$}]{}; 
    \node[vert] (w2) at (-1,-.75) [label=-90:{$v_{(2,2)}$}]{};
    \node[vert] (w3) at (0,-.75) [label=-90:{$v_{(2,3)}$}]{};
    \node[vert] (w4) at (1,-.75) [label=-90:{$v_{(2,5)}$}]{};
    \node[vert] (w5) at (2,-.75) [label=-90:{$v_{(2,6)}$}]{};

    
    \path (c) edge (w1);
    \path (w1) edge[dashed,color=red] (w2);
    \path (w2) edge (w3);
    \path (w3) edge[dashed,color=red] (w4);
    \path (w4) edge (w5);
    \path (w5) edge[dashed,color=red] (c);

        \path (c) edge[line width=3pt] (v1);
\path (v1) edge[dashed,color=red,line width=3pt] (v2);
\path (v2) edge[line width=3pt] (v3);
\path (v3) edge[dashed,color=red,line width=3pt] (v4);
\path (v4) edge[line width=3pt] (v5);
\path (v5) edge[dashed,color=red,line width=3pt] (c);
    \end{tikzpicture}
    \end{center}
    \caption{Parts of the vertex selection gadget $G_1$ and $G_2$ for two
    vertices~$u_1, u_2 \in U_4$, where the number of colors is~$h=6$. Black edges belong to~$E_1$ and red
    dashed edges belong to~$E_2$. The thick and the thin edges both create
    cycles that have edges alternating between Layers~1 and~2. Since the
    color vertex~$w_4$ is contained in both cycles, only one of these cycles
    can be contained in a matching subgraph.}
    \label{fig:matching:selection}
\end{figure}

\emph{Validation gadget by graph $G_3$.}\quad The validation gadget is intended to make sure the vertices selected in any valid solution of the constructed \MatchML instance correspond to a clique in~$H$.
The validation gadget is visualized in \autoref{fig:matching:validation} and \autoref{fig:matching:validation2}. It is constructed as follows.
For each vertex pair $u_i$, $u_{i'}$ with $u_i \in U_j$ and $u_{i'}
\in U_{j'}$ such that $u_i$ and $u_{i'}$ are adjacent in~$H$, we create an edge between $v_{(i, j')}$ and $v_{(i', j)}$ in $G_3$.
Furthermore, create the edge~$\{w_j,w_{j+h/2}\}$ for each $1 \le j \le h/2$. 

\begin{figure}[t]
\begin{center}
    \begin{tikzpicture}[line width=1pt, scale=2]

    \node[vert] (c1) at (-1,0.75) [label=0:{$w_1$}]{};
    \node[vert] (c4) at (-1,0.25) [label=0:{$w_4$}]{};    
    \node[vert] (c2) at (0,0.75) [label=0:{$w_2$}]{};
    \node[vert] (c5) at (0,0.25) [label=0:{$w_5$}]{};    
    \node[vert] (c3) at (1,0.75) [label=0:{$w_3$}]{};
    \node[vert] (c6) at (1,0.25) [label=0:{$w_6$}]{};
    \node[vert] (v1) at (-2,-.5) [label=90:{$v_{(1,1)}$}]{}; 
    \node[vert] (v2) at (-1,-.5) [label=90:{$v_{(1,2)}$}]{};
    \node[vert] (v3) at (0,-.5) [label=90:{$v_{(1,3)}$}]{};
    \node[vert] (v4) at (1,-.5) [label=90:{$v_{(1,5)}$}]{};
    \node[vert] (v5) at (2,-.5) [label=90:{$v_{(1,6)}$}]{};

    \node[vert] (w1) at (-1.75,-1.25) {};
    \node[vert] (w2) at (-1.25,-1.25) {};
    \node[vert] (w3) at (-.75,-1.25) {};
    \node[vert] (w4) at (-.25,-1.25) {};
    
    \draw (v2) -- (w1);
    \draw (v2) -- (w2);
    \draw (v2) -- (w3);
    \draw (v2) -- (w4);
    \draw (c1) -- (c4);
    \draw (c2) -- (c5);
    \draw (c3) -- (c6);

                        \draw [
    thick,
    decoration={
        brace,
        raise=.4cm
    },
    decorate
] (w4) -- (w1)
node [pos=0.5,anchor=north,yshift=-.5cm] {\small $\{v_{(i, 4)}\mid u_i\in N_H(u_1)\cap U_2$\}};

    \end{tikzpicture}
    \end{center}
    \caption{Parts of the validation gadget $G_3$ for $h=6$ and a vertex $u_1\in U_4$.
     The color vertices $w_1$ to $w_6$ form a matching and each vertex $v_{(1, j)}$ is
     connected to all vertices~$v_{(i, 4)}$ with $u_i\in N_H(u_1)\cap U_j$, as exemplarily
     visualized for $v_{(1, 2)}$.}
    \label{fig:matching:validation}
\end{figure}

\begin{figure}[t]
\begin{center}
    \begin{tikzpicture}[line width=1pt, scale=1.7]

    \node[vert] (u1) at (-.5,1.35) [label=90:{$u_1\in U_1$}]{};
    \node[vert] (u2) at (-1.5,-.15) [label=-90:{$u_2\in U_2$}]{};    
    \node[vert] (u3) at (.5,-.15) [label=1-90:{$u_3\in U_3$}]{};
    
    \draw (u1) -- (u2);
    \draw (u1) -- (u3);
    \draw (u2) -- (u3);
    
    \node[vert] (v12) at (3,1.5) [label=90:{$v_{(1,2)}$}]{};
    \node[vert] (v13) at (3.5,1.5) [label=90:{$v_{(1,3)}$}]{};
    \node[vert] (v14) at (4,1.5) [label=90:{$v_{(1,4)}$}]{};
    \node[vert] (v21) at (2.2,.3) [label=-100:{$v_{(2,1)}$}]{};
    \node[vert] (v23) at (2.5,0) [label=-100:{$v_{(2,3)}$}]{}; 
    \node[vert] (v24) at (2.8,-.3) [label=-100:{$v_{(2,4)}$}]{};     
    \node[vert] (v31) at (4.2,-.3) [label=1-80:{$v_{(3,1)}$}]{};    
    \node[vert] (v32) at (4.5,0) [label=1-80:{$v_{(3,2)}$}]{};    
    \node[vert] (v34) at (4.8,.3) [label=1-80:{$v_{(3,4)}$}]{};
    
    \draw (v12) -- (v21);
    \draw (v13) -- (v31);
    \draw (v23) -- (v32);
    
    \draw[dashed] (1.25,2) -- (1.25,-.8);
    
    \end{tikzpicture}
    \end{center}
    \caption{Parts of the validation gadget $G_3$ for $h=4$ and a triangle consisting of~$u_1\in U_1$, $u_2\in U_2$, and $u_3\in U_3$. The triangle (to be interpreted as a part of the original graph) is depicted on the left, the corresponding part of the gadget is depicted on the right.
     The color vertices $w_1$ to $w_4$ are omitted.}
    \label{fig:matching:validation2}
\end{figure}

 Finally, by setting~$k=h^2$ and~$t=\ell=3$ we complete the construction,
 which can clearly be performed in polynomial time and the new parameter~$k$
 solely depends on~$h$.

\emph{Correctness.}\quad  It remains to show that graph~$H$ has a clique that contains each color exactly once if and
 only if there is a vertex set~$X\subseteq V$ with $|X|\geq k$ such that
 graph~$G_z[X]$ contains a perfect matching for each~$1 \le z \le 3$.
 
 $(\Rightarrow)$: 
 Assume that graph~$H$ has a clique
 $K:=\{u_1,u_2,\dots,u_h\}$ and, without loss of generality, $u_i \in U_i$ for
 all $1\le i\le h$.
 We show that $X:=W \cup V_1 \cup V_2 \cup \dots \cup V_h$ is
 a solution for our \MatchML instance.
 By construction, $X$~is of size $h+ h \cdot (h-1)=h^2 = k$.
 It remains to show that graph~$G_z[X]$ has a perfect matching for each $1 \le z \le 3$.
 Recall that we created for each color~$1 \le j \le h$ and for each vertex~$u_i
 \in U_j$ one cycle on $\{w_j\} \cup V_i$ in the graph~$G_1 \cup G_2$ such
 that the edges alternatingly are from $G_1$ and from $G_2$.
 Since $X$~only contains one of these cycles for each color,
 a perfect matching is easy to find for graph~$G_1[X]$ and for graph~$G_2[X]$.
 For graph~$G_3[X]$, we can find the matching~$\{ \{v_{(i,j)},v_{(j,i)}\} \mid (1 \le i,j \le h) \wedge (i \neq j)\} \cup \{
 \{w_j,w_{j+h/2}\} \mid 1 \le j \le h/2\}$, since, by construction, $v_{(i,j)}$ is
 adjacent to $v_{(j,i)}$ if $u_i$~is adjacent to
 $u_j$, and $u_i \in U_i$ and $u_j \in U_j$ (which is the case since~$K$ is a clique).

$(\Leftarrow)$: 
 Assume that there is a vertex set~$X\subseteq V$ with $|X|\geq k$ such that
 graph~$G_z[X]$ contains a perfect matching for each~$1 \le z \le 3$.
 First, consider the graph $G_1 \cup G_2$ and some pair of vertices~$\{x_1, x_2\} \subseteq X$ that is matched in~$G_1$ or in~$G_2$.
 Then, these two vertices must be from the same cycle~$\{w_j\} \cup V_i$ for
 some~$1 \le j \le h$ and~$u_i \in U_j$, since otherwise there is no edge
 between them in any of the two graphs.
 Furthermore, if two vertices from~$\{w_j\} \cup V_i$ are in~$X$, then
 all vertices from~$\{w_j\} \cup V_i$ must be in~$X$
 because otherwise neither~$G_1$ nor~$G_2$ has a perfect matching:
 Every vertex except~$w_j$ has exactly one neighbor in~$G_1$ and one
 neighbor in~$G_2$ which are both enforced to be also contained in~$X$---this enforces the whole
 cycle~$\{w_j\} \cup V_i$ to be contained in~$X$.
 However, $X$~contains the vertices from $\{w_j\} \cup V_i$ for at most
 one~$i$ for every color~$1 \le j \le h$, because $w_j$ can only be matched to
 one vertex in~$G_1$ and to one vertex in~$G_2$.
 This implies that $X$ contains for each~$1 \le j \le h$ all vertices from $\{w_j\} \cup
 V_i$ for exactly one~$i$, since $|X|\ge k=h^2$ and $|\{w_j\} \cup
 V_i|=h$. Without loss of generality, let $V_i\subseteq X$ for all $1\leq i
 \leq h$ and let $u_i$ be the vertex in graph $H$ corresponding to~$V_i$. We
 show that $K=\{u_1,u_2,\dots,u_h\}$ is a clique in $H$.
 In graph~$G_3[X]$ each color vertex~$w_j$ must be matched to
 its only neighbor:~$w_{j+h/2}$ if~$j\leq h/2$ and~$w_{j-h/2}$ if~$j>
 h/2$ (and cannot be matched to $v_{(i, j)}$-vertices).
 Let~$\{u_i,u_{i'}\}\subseteq K$, with $u_i \in U_j$ and $u_{i'} \in U_{j'}$. 
 Then, note that~$v_{(i,j')}$ and~$v_{(i', j)}$ must be matched since, by
 construction of~$G_3$, vertex~$v_{(i,j')}$ is only adjacent to vertex~$v_{(i',
 j)}$.
 Moreover, if~$v_{(i,j')}$ is adjacent to~$v_{(i', j)}$ in~$G_3$, then $u_i$ is
 adjacent to~$u_{i'}$ in~$H$. 
 Thus overall, $K$ is a $h$-vertex clique in~$H$, as required.
 
 To make
 this reduction work for any $t \geq \ell > 3$, we insert $\ell-3$
 additional layers of complete graphs and $t-\ell$ layers of edgeless graphs.
 \end{proof}
 
 \paragraph{$c$-Factors.} We now show that \gprop{$c$-Factor}-\ML{} with $c\ge 2$ is \W{1}-hard when parameterized by $k$
for $\ell\ge2$. We reduce from \MCC which is known to be \W{1}-hard when
parameterized by the solution size~\cite{FHRV09}. The hardness reduction is similar to the one we use in the proof of \autoref{thm:matching}. Intuitively, since there are connected $c$-regular graphs for $c\ge 2$, we only need one layer to build a vertex selection gadget, whereas in the $1$-factor (matching) case we need two layers.

\needspace{3\baselineskip}
 \begin{theorem}\label{thm:cfactor}
 For $c\geq 2$, \FactorML is \NP-hard and \W{1}-hard
when parameterized by the number~$k$ of layers to select for all $\ell\geq2$ and all $t\ge\ell$.
 \end{theorem}

 \begin{proof}
 In the following, we prove that, for $c\geq 2$, $c$-\gprop{Factor}-\ML is \W{1}-hard when parameterized
by~$k$ for $t=\ell=2$ and then argue that the construction is easily
generalizable.
As in the proof of \autoref{thm:matching}, we give a parameterized reduction from \MCC: we are given an
$h$-partite graph $H=(U_1\uplus\ldots\uplus U_h,F)$ and need to determine whether it contains a clique of size $h$.
 
\emph{Reduction idea.}\quad The idea of the reduction is similar to the one for \gprop{Matching} in the proof of \autoref{thm:matching}
but already works for $\ell=2$ in the \gprop{$c$-factor} case ($c\geq2$). The
reason that we only need two layers is that we can construct connected
$c$-regular graphs for $c\geq2$ of almost arbitrary size and this allows us to construct a vertex selection gadget with only one layer.

We say that
vertices from $U_i$ have \emph{color} $i$. Without loss of generality, assume that
$h\geq c+1$, that $h$ is even, and that $|F|\ge 3$.

\emph{Vertices.}\quad
For each color~$j$, $1\leq j\leq h$, we do
the following: We create a color vertex~$w_j$ and for each vertex $u_i$ in $U_j$, we
create a set of $h-1$ vertices $V_i = \{v_{(i,j')} \ | \ 1\leq j' \leq h \text{
and } j' \neq j\}$. Note that we have one vertex in $V_i$ for each color except
the color of $u_i$. Let~$W = \{w_j \ | \ 1\leq j \leq h\}$.
Furthermore, for each edge $f\in F$ we create a set of vertices $V_f$ with
$|V_f|=c-1$. Let $V_F = \bigcup_{f\in F}V_f$ and $V = \bigcup_i V_i \cup W
\cup V_F$.

\emph{Vertex selection gadget by graph~$G_1$.}\quad The vertex selection gadget is intended to make sure the vertices selected in any valid solution of the constructed \MatchML instance correspond to pairwise differently colored vertices of~$H$. The vertex selection
 gadget is visualized in \autoref{fig:cfactor:selection}. It is constructed as follows. For
every $1\leq j\leq h$ and every $u_i$ in $U_j$ we do the following: We create a
connected $c$-regular graph on the vertex set $V_i\cup\{w_j\}$. Note that this
can be done as follows: We order the vertices in $V_i\cup\{w_j\}$
arbitrarily and connect each vertex to the $\lfloor c/2 \rfloor$ subsequent vertices,
wrapping around at the end. If $c$ is odd, then we additionally connect each vertex $v$
with the vertex at position $(x+h/2) \text{ mod } h$ in the ordering, where $x$
is the position of vertex $v$.
Furthermore, we create a complete graph on the vertices in $V_F$. 

\begin{figure}[t]
\begin{center}
    \begin{tikzpicture}[line width=1pt, scale=2]

    \node[vert] (v1) at (-2,.75) [label=90:{$v_{(1,1)}$}]{}; 
    \node[vert] (v2) at (-1,1.25) [label=90:{$v_{(1,2)}$}]{};
    \node[vert] (v3) at (0,1.25) [label=90:{$v_{(1,3)}$}]{};
    \node[vert] (v4) at (1,1.25) [label=90:{$v_{(1,5)}$}]{};
    \node[vert] (v5) at (2,.75) [label=90:{$v_{(1,6)}$}]{};
    \node[vert] (c) at (0,0) [label=0:{$w_4$}]{};
    \node[vert] (w1) at (-2,-.75) [label=-90:{$v_{(2,1)}$}]{}; 
    \node[vert] (w2) at (-1,-1.25) [label=-90:{$v_{(2,2)}$}]{};
    \node[vert] (w3) at (0,-1.25) [label=-90:{$v_{(2,3)}$}]{};
    \node[vert] (w4) at (1,-1.25) [label=-90:{$v_{(2,5)}$}]{};
    \node[vert] (w5) at (2,-.75) [label=-90:{$v_{(2,6)}$}]{};

    
    \path (c) edge[dashed] (w1);
    \path (w1) edge[dashed] (w2);
    \path (w2) edge[dashed] (w3);
    \path (w3) edge[dashed] (w4);
    \path (w4) edge[dashed] (w5);
    \path (w5) edge[dashed] (c);
    \path (w1) edge[dashed] (w5);
    \path (c) edge[dashed] (w3);
    \path (w2) edge[dashed, bend left] (w4);

        \path (c) edge (v1);
\path (v1) edge (v2);
\path (v2) edge (v3);
\path (v3) edge (v4);
\path (v4) edge (v5);
\path (v5) edge (c);
    \path (v1) edge (v5);
    \path (c) edge (v3);
    \path (v2) edge[bend right] (v4);
    \end{tikzpicture}
    \end{center}
    \caption{Parts of the vertex selection gadget $G_1$ for two
    vertices~$u_1, u_2 \in U_4$, where the number of colors is~$h=6$ and $c=3$. The normal and the dashed edges both create
    $c$-regular subgraphs. Since the
    color vertex~$w_4$ is contained in both subgraphs, only one of these subgraphs
    can be selected. The complete subgraph on $V_F$ is not depicted.}
    \label{fig:cfactor:selection}
\end{figure}

\emph{Validation gadget by graph~$G_2$.}\quad The validation gadget is intended to make sure the vertices selected in any valid solution of the constructed \MatchML instance correspond to vertices of~$H$ that form a clique in~$H$. The validation
 gadget is visualized in \autoref{fig:cfactor:validation}. It is constructed as follows. For each edge
$f\in F$, we do the following: Let $u_i$ and $u_{i'}$ be the endpoints of~$f$ and $u_i
\in U_j$ and $u_{i'} \in U_{j'}$.  We create a complete graph on the vertices in
$V_f \cup \{v_{(i, j')}, v_{(i', j)}\}$. Note that this complete graph has order
$c+1$ and hence is a $c$-regular graph. Furthermore, we create a complete graph
on all vertices in $W$. 

\begin{figure}[t]
\begin{center}
    \begin{tikzpicture}[line width=1pt, scale=2]
    \node[vert] (v1) at (-2,-.5) [label=90:{$v_{(1,1)}$}]{}; 
    \node[vert] (v2) at (-1,-.5) [label=90:{$v_{(1,2)}$}]{};
    \node[vert] (v3) at (0,-.5) [label=90:{$v_{(1,3)}$}]{};
    \node[vert] (v4) at (1,-.5) [label=90:{$v_{(1,5)}$}]{};
    \node[vert] (v5) at (2,-.5) [label=90:{$v_{(1,6)}$}]{};

    \node[draw, circle, inner sep=2pt] (w11) at (-1.6,-.9) {};
    \node[draw, circle, inner sep=2pt] (w21) at (-1.2,-.9) {};
    \node[draw, circle, inner sep=2pt] (w31) at (-.8,-.9) {};
    \node[draw, circle, inner sep=2pt] (w41) at (-.4,-.9) {};
    
    \node[draw, circle, inner sep=2pt] (w12) at (-1.9,-1.1) {};
    \node[draw, circle, inner sep=2pt] (w22) at (-1.3,-1.1) {};
    \node[draw, circle, inner sep=2pt] (w32) at (-.7,-1.1) {};
    \node[draw, circle, inner sep=2pt] (w42) at (-.1,-1.1) {};

    \node[vert] (w1) at (-2.5,-1.5) {};
    \node[vert] (w2) at (-1.5,-1.5) {};
    \node[vert] (w3) at (-.5,-1.5) {};
    \node[vert] (w4) at (.5,-1.5) {};
    
    \draw (w11) -- (w12);
    \draw (w21) -- (w22);
    \draw (w31) -- (w32);
    \draw (w41) -- (w42);
    
    \draw (w11) -- (v2);
    \draw (w21) -- (v2);
    \draw (w31) -- (v2);
    \draw (w41) -- (v2);
    
    \draw (w1) -- (w12);
    \draw (w2) -- (w22);
    \draw (w3) -- (w32);
    \draw (w4) -- (w42);

    \path (v2) edge[bend angle= 20, bend left] (w12);
    \path (v2) edge[bend angle= 20, bend left] (w22);
    \path (v2) edge[bend angle= 20, bend right] (w32);
    \path (v2) edge[bend angle= 20, bend right] (w42);
    
    \path (w11) edge[bend angle= 20, bend left] (w1);
    \path (w21) edge[bend angle= 20, bend right] (w2);
    \path (w31) edge[bend angle= 20, bend left] (w3);
    \path (w41) edge[bend angle= 20, bend right] (w4);
    
    \path (v2) edge[bend angle= 15, bend right] (w1);
    \path (v2) edge[bend angle= 20, bend left] (w2);
    \path (v2) edge[bend angle= 20, bend right] (w3);
    \path (v2) edge[bend angle= 15, bend left] (w4);

                        \draw [
    thick,
    decoration={
        brace,
        raise=.4cm
    },
    decorate
] (w4) -- (w1)
node [pos=0.5,anchor=north,yshift=-.5cm] {\small $\{v_{(i, 4)}\mid u_i\in N_H(u_1)\cap U_2$\}};

    \end{tikzpicture}
    \end{center}
    \caption{Parts of the validation gadget $G_2$ for $c=3$, $h=6$, and a vertex $u_1\in U_4$. Each vertex $v_{(1, j)}$ and each vertex $v_{(i, 4)}$ with $u_i\in N_H(u_1)\cap U_j$ together with the vertices in~$V_{\{u_1, u_i\}}$ (depicted as the small vertices) form a complete subgraph of size~$c$, as exem\-plarily visualized for $v_{(1, 2)}$. The color vertices $w_1$ to $w_6$ form a complete subgraph and are not depicted.}
    \label{fig:cfactor:validation}
\end{figure}

By setting $k = h^2 + \frac{1}{2}h(h-1)\cdot(c-1)$ we complete the
construction, which can clearly be performed in polynomial time and the new
parameter~$k$ solely depends on~$h$.

\emph{Correctness.}\quad $(\Rightarrow)$: Assume that graph $H$
has an $h$-colored clique $K$ and without loss of generality $K=\{u_1, u_2, \ldots,
u_h\}$ and $u_i \in U_i$ for each $i \in \{1, \ldots, h\}$. Let $F_K$ denote the set of all edges in the clique $K$. Furthermore,
let $V_K = \bigcup_{1\leq i\leq h}V_i$ and $V_{F_K} = \bigcup_{f\in F_K}V_f$. We
show that $X = W \cup V_K \cup V_{F_K}$ is a solution for
$c$-\textsc{Factor}-\ML. Note that~$|X| = h^2 +
\frac{1}{2}h(h-1)\cdot(c-1)$ by construction. It remains to show that $G_1[X]$ and $G_2[X]$ each
have $c$-factors. Observe that for any graph~$G=(V, E)$ and any partition
$\mathcal{P} = \{P_1, P_2, \ldots, P_p\}$ of~$V$ we have that, if
$G[P_i]$ has a $c$-factor for all~$i$,~$1\leq i\leq p$, then $G$ has a $c$-factor as
well.
\begin{compactenum}
  \item Note that $G_1[V_{F_K}]$ is a complete graph of order
  $\frac{1}{2}h(h-1)\cdot(c-1)\geq 2c+1$ and hence has a $c$-factor.
  Furthermore, let $u_i \in K$ and assume that $u_i \in U_j$. Then we have that $G_1[V_i \cup
  \{w_j\}]$ is by construction a $c$-regular graph and hence also has a
  $c$-factor.
  Since~$K$ is $h$-colored, $\{V_i \cup
  \{w_j\} \ | \ u_i \in K \text{ and } u_i \in U_j \}$ is a partition of~$V_K \cup W$.
  \item Note that $G_2[W]$ is a complete graph of size $h\geq c+1$ and hence
  has a $c$-factor.
  Let~$f\in F_K$ be the edge connecting $u_i$ and $u_{i'}$, and let $u_i
\in U_j$ and $u_{i'} \in U_{j'}$. Then we have that $G_2[V_f \cup \{v_{(i, j')},
v_{(i', j)}\}]$ is by construction a complete graph of order~$c+1$ and hence also
has a $c$-factor. Since $K$ is an $h$-colored clique, we have that $\{V_f \cup \{v_{(i, j')},
v_{(i', j)}\} \ | \ f \in F_K \text{ and } f = \{u_i,
u_{i'}\} \}$ is a partition of
  $V_{F_K} \cup V_K$. Notably, we also have that $\mathcal{P}_K = \{\{v_{(i, j')},
v_{(i', j)}\} \ | \ \{u_i,
u_{i'}\} = f \text{ for some } f \in F_K\}$ is a
partition of~$V_K$, since any $u_i$ is connected to $h-1$ other vertices with a
different color each and therefore we have that $\{v_{(i, j')} \ | \ \{v_{(i, j')},
v_{(i', j)}\} \in \mathcal{P}_K\} = V_i$.
\end{compactenum}
 $(\Leftarrow)$: Assume that there is a vertex set $X\subseteq V$ such that $|X|\geq k$
and $G_i[X]$ has a $c$-factor for~$\ell$ different layers $i$.
By construction of the layers we have that $G_1[X]$ and $G_2[X]$ both have
$c$-factors. Furthermore, we show the following facts:
\begin{compactdesc}
  \item[Fact 1:] \label{enum:vertices} If $v_{(i, j')} \in X$ and $u_i \in U_j$,
  then $V_i \subseteq X$ and $w_j \in X$: By construction $G_1[V_i\cup\{w_j\}]$ is a connected $c$-regular
  graph and each $v_{(i, j')}\in V_i$ is \emph{only} connected to other vertices in
  $V_i\cup\{w_j\}$ in $G_1$. Note that any proper subgraph of a connected
  $c$-regular graph is not $c$-regular and does not have a $c$-factor. It
  follows that as soon as any $v_{(i, j')}\in V_i$ is included in $X$, all other
  vertices in $V_i$ have to be included, as well as $w_j$, the vertex
  corresponding to the color of $u_i$.
  \item[Fact 2:] \label{enum:colors1} If $V_i \subseteq X$ and $V_{i'} \subseteq
  X$ and $u_i \in U_j$, then $u_{i'}\notin U_j$: Note that $G_1[V_{i'}]$ does
  not have a $c$-factor. By construction we get a connected $c$-regular graph by
  adding the vertex corresponding to the color of $u_{i'}$, hence $G_1[V_{i'}]$
  is a proper subgraph of a connected $c$-regular graph. Furthermore, we have
  that $w_j$ is already part of the $c$-regular spanning graph of
  $G_1[V_i\cup\{w_j\}]$, therefore $u_{i'}$ cannot have color $j$,
  that is, $u_{i'}\notin U_j$.
  \item[Fact 3:] \label{enum:edges2} If $X \cap V_f \neq \emptyset$ for some $f
  \in F$, then $V_f \cup \{v_{(i, j')},
v_{(i', j)}\} \subseteq X$, where $u_i \in U_j$ and~$u_{i'}\in U_{j'}$ are the endpoints of~$f$: By
construction, $G_2[V_f \cup \{v_{(i, j')}, v_{(i', j)}\}]$ is a clique of size~$c+1$ and hence a connected $c$-regular
graph. Furthermore, this clique is disconnected from the rest of $G_2$ and hence
as soon as one of its vertices is included in $X$, all of them are.
\end{compactdesc}
Now we show that $K = \{u_i \ | \ V_i \subseteq X\}$ is an $h$-colored clique in
$H$. First, we show that there must be some $v_{(i, j)}\in X$: Since $|X|\geq
h^2 + \frac{1}{2}h(h-1)\cdot(c-1)$, we have that $X \neq W$ and from Fact~3 we get that $X \not \subseteq W \cup V_F$.

By Fact~1 we know that any $V_i$ is either a
subset of $X$ or $X\cap V_i = \emptyset$.
Furthermore, if~$V_i \subseteq X$ and $u_i \in U_j$, so is $w_j$.
Fact~2 yields that we cannot have two vertices of the
same color in~$K$. This implies
\begin{equation}\label{ineq1}
|\bigcup_{V_i \subseteq X} V_i \cup W|\leq h^2.
\end{equation}
Note that $|X|\geq h^2 + \frac{1}{2}h(h-1)\cdot(c-1)$ and Inequality~\ref{ineq1}
imply that $X\cap V_F \neq \emptyset$. By Fact~3, $X$ can only include vertices corresponding to edges between vertices $u_i$
and $u_{i'}$ if $V_i \cup V_{i'} \subseteq X$. Therefore,
\begin{equation}\label{ineq2}
|X\cap V_F| \leq \frac{1}{2}h(h-1)\cdot(c-1).
\end{equation}
Note that $|X| = h^2 + \frac{1}{2}h(h-1)\cdot(c-1)$ if and only if both
Inequalities~\ref{ineq1}~and~\ref{ineq2} are equalities. This implies that $|K| = h$, all vertices in $K$ have different
colors and all colors are present (Inequality~\ref{ineq1}), and that $K$ is a clique
in $H$ (Inequality~\ref{ineq2}).

 To make
 this reduction work for any $t \geq \ell > 2$, we insert $\ell-2$
 additional layers of complete graphs and $t-\ell$ layers of edgeless graphs.
\end{proof}

\section{Hamiltonian Paths}
\label{sec:path-hard}

In this section we investigate the problem of finding Hamiltonian subgraphs, that is, subgraphs that have a simple path visiting all vertices.
\autoref{cor:hardness} in \autoref{sec:meta} states that \PathML is
\W{1}-hard when parameterized by the combined parameter~$k$ and~$\ell$. Through closer
inspection we can get a stronger result. \gprop{Hamiltonian}-\SL is known to be \NP-hard and \FPT when
parameterized by the size of the subgraph~$k$~\cite{monien1985find}.
For the multi-layer case, we can show that it is already \W{1}-hard for any
constant $\ell \geq 2$.

\begin{theorem}
\label{thm:hamiltonian}
  \PathML is \NP-hard and \W{1}-hard when parameterized by the number~$k$ of vertices to select for
  all $\ell \geq 2$ and $t \geq \ell$.
\end{theorem}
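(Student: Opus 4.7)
The plan is a parameterized reduction from \MCC, which is \W{1}-hard with respect to the clique size $h$~\cite{FHRV09}, following the same architectural template as the proof of \autoref{thm:matching}. Given an instance $(H=(U_1\uplus\cdots\uplus U_h,F),h)$, I introduce, for every $u_i\in U_j$, a gadget $V_i=\{v_{(i,j')}\mid j'\neq j\}$ of $h-1$ port vertices, together with a small fixed frame $W=\{a_1,b_1,\ldots,a_h,b_h\}$ of color vertices, and set $k:=h(h-1)+2h$. The two layers $G_1,G_2$ will be engineered jointly so that any $X$ with $|X|\geq k$ and both $G_1[X],G_2[X]$ Hamiltonian must be of the form $W\cup V_{i_1}\cup\cdots\cup V_{i_h}$ with exactly one $u_{i_j}\in U_j$ for each color $j$, and with $\{u_{i_1},\ldots,u_{i_h}\}$ inducing a clique in $H$.

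Layer $G_1$ is the selection layer. For each $u_i\in U_j$ I place a path through $V_i$ connecting $a_j$ to $b_j$ (visiting the $h-1$ ports in the fixed order $1,\ldots,j-1,j+1,\ldots,h$), and I chain consecutive colors by adding the edges $\{b_j,a_{j+1}\}$. The pages associated with $u_i,u_{i'}\in U_j$ share only the two attachment points $a_j$ and $b_j$, so a standard book-gadget argument, essentially the same as in the vertex-selection part of the proof of \autoref{thm:matching}, shows that $G_1[X]$ can be Hamiltonian only if, for every color $j$, the set $X$ contains the full page of exactly one $u_i\in U_j$: two full pages would create two internally vertex-disjoint $a_j$-$b_j$-paths that no Hamiltonian path could traverse without revisiting $a_j$ or $b_j$, and a partial page leaves an internal port of degree at most~$1$ that would have to be a path endpoint, but the count $k$ only permits two endpoints overall.

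Layer $G_2$ is the validation layer. For every $\{u_i,u_{i'}\}\in F$ with $u_i\in U_j$, $u_{i'}\in U_{j'}$ I add the single cross-gadget edge $\{v_{(i,j')},v_{(i',j)}\}$. Inside each $V_i$ and on the frame $W$ I install a fixed skeleton of layer-$2$ edges tuned so that any Hamiltonian traversal of $G_2[X]$ is forced to enter each selected $V_{i_j}$, walk along its skeleton, and leave through a port along exactly one cross-gadget edge into some $V_{i_{j'}}$, with precisely one cross-gadget edge consumed per unordered color pair. A Hamiltonian path therefore exists in $G_2[X]$ if and only if all $\binom{h}{2}$ cross-gadget edges are present, i.e., the selected $u_{i_j}$'s form a clique.

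With $k=\Theta(h^2)$, the construction is polynomial-time and yields \NP-hardness and \W{1}-hardness with respect to $k$ for $\ell=2$. For general $\ell\geq 2$ and $t\geq \ell$, I pad exactly as at the end of the proof of \autoref{thm:matching}: add $\ell-2$ further layers that are complete graphs on $V$ (Hamiltonian on every vertex subset of size at least~$2$) and $t-\ell$ edgeless layers (that need not satisfy the property). The main obstacle is engineering the intra-$V_i$ and intra-$W$ skeleton of $G_2$ so that the Hamiltonicity requirement really forces all $\binom{h}{2}$ cross-gadget edges into the path, rather than only a spanning-tree's worth of $h-1$ of them; a naive encoding would merely certify connectivity of $\{u_{i_1},\ldots,u_{i_h}\}$ in $H$. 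The intended fix is to give every port $v_{(i,j')}$ layer-$2$ degree exactly~$1$ inside the skeleton, pairing it with an internal switch vertex associated to the pair $(j,j')$ in a pre-specified traversal order on the $\binom{h}{2}$ pairs, so that any Hamiltonian path is obliged to leave each port along its unique cross-gadget edge and therefore to visit all color pairs in succession.
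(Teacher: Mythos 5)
Your high-level template (two layers, selection plus validation, then padding with $\ell-2$ complete and $t-\ell$ edgeless layers) matches the paper's in spirit, and reducing from \MCC rather than from \textsc{Multicolored Biclique} (which is what the paper uses) is not itself a problem. The selection layer $G_1$ is also essentially workable. But the validation layer $G_2$ --- which is where the whole difficulty of the theorem lives --- is not actually constructed, and the one-sentence ``intended fix'' does not close the gap. Concretely: (i)~if the ``internal switch vertices'' are new vertices, they change $k$ and, worse, they must also be covered by the Hamiltonian path in $G_1[X]$, which your $G_1$ (a single chain $a_1$--$V_{i_1}$--$b_1$--$a_2$--$\cdots$--$b_h$) does not provide for; if instead the ports are paired with each other or with the $2h$ frame vertices, the degree-one requirement cannot be met for all $h(h-1)$ ports. (ii)~Even granting the degree-one skeleton, the forced edge set (each non-endpoint port using its skeleton edge and its cross edge) is only a degree-two subgraph; in the forward direction you must prove it is a \emph{single} path rather than a path plus disjoint cycles, and whether that holds depends entirely on the unspecified pairing --- this is exactly the classic failure mode of Hamiltonicity gadgets, and you have given no argument for it. (iii)~The two endpoints of the Hamiltonian path in $G_2[X]$ may be ports, which exempts up to two color pairs from validation, so cliqueness is not fully enforced unless you additionally pin the endpoints to frame vertices.

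The paper sidesteps all three issues with a different structural device: both layers are \emph{leveled} graphs on exactly $k=2h^2+2h+2$ levels with edges only between consecutive levels, so any Hamiltonian path on at least $k$ vertices must select exactly one vertex per level and sweep the levels in order. This single observation simultaneously forces the selection (in $G_1$), forces every color pair to be validated (one ascending and one descending edge-vertex per pair in $G_2$, adjacent only if they encode the same edge of $H$), and eliminates the path-versus-cycles ambiguity. To repair your proof you would need to redesign $G_2$ (and the vertex set) along such lines, or else fully specify and verify a transition system on the ports that realizes a single Eulerian-style trail through all $\binom{h}{2}$ cross edges while keeping both layers' Hamiltonian paths consistent; as written, the argument is incomplete at its central step.
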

\begin{proof}
We reduce from the \textsc{Multicolored Biclique} problem. In \textsc{Multicolored Biclique} we are given a bipartite graph
$H=(U\cup W, F)$ and a partitioning that partitions vertices in $U$ and $W$ into $h$
parts each, that is, $U = U_1 \uplus \ldots \uplus U_h$ and $W = W_1 \uplus
\ldots \uplus W_h$. We call these $2h$ parts \emph{colors}, that is, each vertex of~$H$ is of exactly one color in~$\{U_1, \ldots, U_h, W_1, \ldots, W_h\}$. We need to determine whether $H$
contains a biclique of size $2h$ that contains one vertex of each color. A simple parameterized reduction from \textsc{Clique} shows that \textsc{Multicolored
  Biclique} is \W{1}-hard~\cite{DM12}.
Observe that the vertex coloring implies that any solution contains~$h$ vertices from~$U$ and~$h$ vertices from~$W$. We
will call the vertices from~$U$ \emph{low}, and the vertices from~$W$
\emph{high}, imagining that the vertices from~$W$ are on top of the vertices from~$U$. Similarly, we call colors~$U_i$ \emph{low} and colors~$W_i$~\emph{high}. Note that all colors are different.
Given an instance~$H$ of \textsc{Multicolored
  Biclique}, we construct an instance of \PathML for
  $t=\ell=2$ as follows
 and then argue that the construction is easily generalizable.
  
  \emph{Reduction idea.}\quad The main idea is to create new vertices for both the vertices of the \textsc{Multicolored
  Biclique} instance as well as the edges. We create two layers where in the first one any maximal Hamiltonian subgraph contains a path that ``selects'' one vertex from each color and one edge of each combination of a high and low color. The second layer is constructed in such a way that each maximal Hamiltonian subgraph of the first layer is also Hamiltonian in the second layer if and only if the selected edges indeed connect the selected vertices of the respective colors, implying that the selected vertices form a multicolored biclique.
  
\emph{Vertices.}\quad The vertex
set~$V$ consists of the following subsets:
\begin{compactitem}
  \item All vertices $U\cup W$ of $H$,
\item $\{s_1,s_2\}$, where we assume that~$s_1$ and~$s_2$ are not vertices
of~$H$,
\item $A_{i,j}$,~$1\le i \le h$, $1\le j\le h$,
where~$A_{i,j}:=\{\alpha_{\{u,w\}}\mid \{u,w\}\in F \wedge u\in U_i \wedge w\in
W_j\}$, and
\item $D_{i,j}$,~$1\le i\le h$, $1\le j \le h$, where
$D_{i,j}:=\{\delta_{\{w,u\}}\mid \{w,u\}\in F \wedge w\in W_i \wedge u\in
U_j\}$.
\end{compactitem}
Informally, the latter two sets are constructed by adding two vertices
for each edge of~$H$, each corresponding to one orientation of the
undirected edge. The vertices are then assigned to the vertex sets
according to the colors of their endpoints and the
orientations. Oriented edges from~$U$ to~$W$ (and their corresponding
vertices in~$V$) are called \emph{ascending}, oriented edges from~$W$ to~$U$
(and their corresponding vertices) are called \emph{descending}. Note that all these sets are non-empty unless we face a no-instance.

Now we describe how to construct a vertex and edge selection gadget by graph
$G_1$ and a validation gadget by graph $G_2$. We organize the vertices of both graphs in \emph{levels}. Each graph has~$2h^2+2h+2$
levels of vertices with edges only between neighboring levels; each~$U_i$, $W_i$,
$A_{i,j}$ and each~$D_{i,j}$ forms one level, the two remaining levels contain~$s_1$ and~$s_2$, respectively.

\emph{Vertex and edge selection gadget by
graph~$G_1$.}\quad Informally, graph~$G_1$ is constructed by putting all low
vertices and their incident ascending edges into low levels, then adding two
levels for~$s_1$ and~$s_2$, and then putting all high vertices and their incident
descending edges into high levels. More precisely,~$U_1$ is the first level
of~$G$ and~$A_{1,1}$ is the second level. Then edges are added from each~$u\in U_1$ to all vertices of~$A_{1,1}$ that correspond to an edge incident with~$u$. Then, the ascending
vertices ``incident'' with vertices from~$U_1$ are added for
increasing colors of the high endpoints. Afterwards, the vertices
from~$U_2$ are added and then the vertices corresponding to their
incident edges, and so on. The special vertices~$s_1$ and~$s_2$ are
added in two middle levels, separating the levels containing low
vertices from those containing high vertices.
Formally, the graph~$G_1$ is constructed as follows:
\begin{compactitem}
\item For each~$u\in U_i$,~$1\le i\le h$, add an edge to each~$\alpha_{\{u,w\}} \in A_{i,1}$,
\item for each~$\alpha_{\{u,w\}}\in A_{i,j}$,~$1 < j <h$, add an edge to
each~$\alpha_{\{u,w'\}}\in A_{i,j+1}$,
\item for each~$\alpha_{\{u,w\}}\in A_{i,h}$,~$1\le i <h$, add an edge to
each~$u'\in U_{i+1}$,
\item for each~$\alpha_{\{u,w\}}\in A_{h,h}$, add an edge to~$s_1$,
\item add the edge~$\{s_1,s_2\}$,
\item for each~$w\in W_{1}$, add the edge~$\{s_2,w\}$,
\item for each~$w\in W_i$,~$1 < i\le h$, add an edge to each~$\delta_{\{w,u\}} \in D_{i,1}$,
\item for each~$\delta_{\{w,u\}}\in D_{i,j}$,~$1\le j< h$, add an edge to
each~$\delta_{\{w,u'\}}\in D_{i,j+1}$, and
\item for each~$\delta_{\{w,u\}}\in D_{i,h}$,~$1\le i<h$, add an edge to
each~$w'\in W_{i+1}$.
\end{compactitem}
The idea behind the construction is that any path from the first level
to the last level corresponds to a selection of~$2h$ vertices and
of~$2h^2$ edges incident with these vertices. The vertex selection
 gadget is visualized in \autoref{fig:hamilton:selection}.

\begin{figure}[t]
\begin{center}
    \begin{tikzpicture}[line width=1pt, scale=1.5]
    \node[vert,minimum size=0.8cm] (u1) at (0,-2) {$U_1$}; 
    \node[vert,minimum size=0.8cm] (u2) at (0,-1) {$U_2$};
    \node[vert,minimum size=0.8cm] (u3) at (0,0) {$U_3$};
    \node[vert,minimum size=0.8cm] (u4) at (0,1) {$U_4$};
    \node[vert,minimum size=0.8cm] (u5) at (0,2) {$U_5$};
    \node[vert,minimum size=0.8cm] (a11) at (1,-2) {$A_{1,1}$}; 
    \node[vert,minimum size=0.8cm] (a21) at (1,-1) {$A_{2,1}$};
    \node[vert,minimum size=0.8cm] (a31) at (1,0) {$A_{3,1}$};
    \node[vert,minimum size=0.8cm] (a41) at (1,1) {$A_{4,1}$};
    \node[vert,minimum size=0.8cm] (a51) at (1,2) {$A_{5,1}$};
    \node[vert,minimum size=0.8cm] (a12) at (2,-2) {$A_{1,2}$}; 
    \node[vert,minimum size=0.8cm] (a22) at (2,-1) {$A_{2,2}$};
    \node[vert,minimum size=0.8cm] (a32) at (2,0) {$A_{3,2}$};
    \node[vert,minimum size=0.8cm] (a42) at (2,1) {$A_{4,2}$};
    \node[vert,minimum size=0.8cm] (a52) at (2,2) {$A_{5,2}$};
    \node[vert,minimum size=0.8cm] (a13) at (3,-2) {$A_{1,3}$}; 
    \node[vert,minimum size=0.8cm] (a23) at (3,-1) {$A_{2,3}$};
    \node[vert,minimum size=0.8cm] (a33) at (3,0) {$A_{3,3}$};
    \node[vert,minimum size=0.8cm] (a43) at (3,1) {$A_{4,3}$};
    \node[vert,minimum size=0.8cm] (a53) at (3,2) {$A_{5,3}$};
    \node[vert,minimum size=0.8cm] (a14) at (4,-2) {$A_{1,4}$}; 
    \node[vert,minimum size=0.8cm] (a24) at (4,-1) {$A_{2,4}$};
    \node[vert,minimum size=0.8cm] (a34) at (4,0) {$A_{3,4}$};
    \node[vert,minimum size=0.8cm] (a44) at (4,1) {$A_{4,4}$};
    \node[vert,minimum size=0.8cm] (a54) at (4,2) {$A_{5,4}$};
    \node[vert,minimum size=0.8cm] (a15) at (5,-2) {$A_{1,5}$}; 
    \node[vert,minimum size=0.8cm] (a25) at (5,-1) {$A_{2,5}$};
    \node[vert,minimum size=0.8cm] (a35) at (5,0) {$A_{3,5}$};
    \node[vert,minimum size=0.8cm] (a45) at (5,1) {$A_{4,5}$};
    \node[vert,minimum size=0.8cm] (a55) at (5,2) {$A_{5,5}$};
    
    \node[vert] (s1) at (2.5,3) {$s_1$};
    \node[vert] (s2) at (2.5,3.75) {$s_2$};

    \draw (s1) -- (s2);

    \begin{pgfonlayer}{bg}
    \path (u1.center) edge[color=lightgray,line width=6pt] (a11.center);
    \path (a11.center) edge[color=lightgray,line width=6pt] (a12.center);
    \path (a12.center) edge[color=lightgray,line width=6pt] (a13.center);
    \path (a13.center) edge[color=lightgray,line width=6pt] (a14.center);
    \path (a14.center) edge[color=lightgray,line width=6pt] (a15.center);
    \path (u2.center) edge[color=lightgray,line width=6pt] (a21.center);
    \path (a21.center) edge[color=lightgray,line width=6pt] (a22.center);
    \path (a22.center) edge[color=lightgray,line width=6pt] (a23.center);
    \path (a23.center) edge[color=lightgray,line width=6pt] (a24.center);
    \path (a24.center) edge[color=lightgray,line width=6pt] (a25.center);
    \path (u3.center) edge[color=lightgray,line width=6pt] (a31.center);
    \path (a31.center) edge[color=lightgray,line width=6pt] (a32.center);
    \path (a32.center) edge[color=lightgray,line width=6pt] (a33.center);
    \path (a33.center) edge[color=lightgray,line width=6pt] (a34.center);
    \path (a34.center) edge[color=lightgray,line width=6pt] (a35.center);
    \path (u4.center) edge[color=lightgray,line width=6pt] (a41.center);
    \path (a41.center) edge[color=lightgray,line width=6pt] (a42.center);
    \path (a42.center) edge[color=lightgray,line width=6pt] (a43.center);
    \path (a43.center) edge[color=lightgray,line width=6pt] (a44.center);
    \path (u5.center) edge[color=lightgray,line width=6pt] (a51.center);
    \path (a51.center) edge[color=lightgray,line width=6pt] (a52.center);
    \path (a52.center) edge[color=lightgray,line width=6pt] (a53.center);
    \path (a53.center) edge[color=lightgray,line width=6pt] (a54.center);
    \path (a54.center) edge[color=lightgray,line width=6pt] (a55.center);
    \path (a44.center) edge[color=lightgray,line width=6pt] (a45.center);
   \path (a15.center) edge[color=black,line width=6pt, out=150, in=-30] (u2.center);
      \path (a25.center) edge[color=black,line width=6pt, out=150, in=-30] (u3.center);
      \path (a35.center) edge[color=black,line width=6pt, out=150, in=-30] (u4.center);
      \path (a45.center) edge[color=black,line width=6pt, out=150, in=-30] (u5.center);
      \path (a55.center) edge[line width=6pt, out=150, in=-30] (s1.center);
    \end{pgfonlayer}

    \end{tikzpicture}
    \end{center}
    \caption{The lower half of the vertex selection gadget $G_1$ for $h=5$. The circles annotated with $U_i$ or $A_{i, j}$ correspond to the respective vertex set from the construction, $s_1$ and $s_2$ are single vertices connected by an edge. The thick gray edges represent the edges between the two vertex sets they connect. The thick black edges indicate that all edges between the two vertex sets are present. The upper half of the gadget is symmetric and not depicted.}
    \label{fig:hamilton:selection}
\end{figure}

\emph{Validation gadget by graph~$G_2$.}\quad With the
second graph~$G_2$ we enforce that the selected ascending and descending
edges between each color pair~$U_i$ and~$W_j$ correspond to the same edge
in~$H$ and that any path of length~$2h+2h^2+1$ passes through each level
of~$G_1$ and each level of~$G_2$.
Formally,~$G_2$ is constructed as follows. Herein, assume an arbitrary
but fixed ordering on pairs of low and high colors.
\begin{compactitem}
\item Level~1 contains~$s_1$,
\item level~$2i$,~$1\le i\le h^2$, contains all ascending vertices of the~$i$th
  color pair,
\item level~$2i+1$,~$1\le i\le h^2$, contains all descending vertices of
  the~$i$th color pair,
\item level~$2h^2+2$ contains~$s_2$,
\item level~$2h^2+2+i$,~$1\le i\le h$, contains all vertices from~$U_i$, and
\item level~$2h^2+h+2+i$,~$1\le i\le h$, contains all vertices from~$W_i$.
\end{compactitem}
All edges between consecutive levels are added except for the
levels~$2i$ and~$2i+1$,~$1\le i\le h^2$: Here, we add only an edge between
vertices that correspond to the same edge, that is, we add the edge
set~$\{\alpha_{\{u,w\}},\delta_{\{w,u\}}\mid \{u,w\}\in F\}.$ The validation
 gadget is visualized in \autoref{fig:hamilton:validation}.

\begin{figure}[t]
\begin{center}
    \begin{tikzpicture}[line width=1pt, scale=1.5]
    \node[vert,minimum size=0.8cm] (u2) at (-.5,.5) {$D_{5, 5}$};
    \node[vert,minimum size=0.8cm] (u4) at (-.5,-1) {$D_{2, 5}$};
    \node[vert,minimum size=0.8cm] (u5) at (-.5,-2) {$D_{1,5}$};
    \node[vert,minimum size=0.8cm] (a21) at (.5,.5) {$A_{5,5}$};
    \node[vert,minimum size=0.8cm] (a41) at (.5,-1) {$A_{2,5}$};
    \node[vert,minimum size=0.8cm] (a51) at (0.5,-2) {$A_{1,5}$};
    \node[vert,minimum size=0.8cm] (a52) at (2,-2) {$D_{1,2}$};
    \node[vert,minimum size=0.8cm] (a53) at (3,-2) {$A_{1,2}$};
    \node[vert,minimum size=0.8cm] (a24) at (4,.5) {$D_{5,1}$};
    \node[vert,minimum size=0.8cm] (a44) at (4,-1) {$D_{2,1}$};
    \node[vert,minimum size=0.8cm] (a54) at (4,-2) {$D_{1,1}$};
    \node[vert,minimum size=0.8cm] (a25) at (5,.5) {$A_{5,1}$};
    \node[vert,minimum size=0.8cm] (a45) at (5,-1) {$A_{2,1}$};
    \node[vert,minimum size=0.8cm] (a55) at (5,-2) {$A_{1,1}$};
    
    \node[vert] (s1) at (2.25,-2.75) {$s_1$};
    \node[vert] (s2) at (2.25,1.25) {$s_2$};
 
    \begin{pgfonlayer}{bg}
      
    \path (s2.center) edge[line width=6pt, out=-150, in=40] (u2.center);
    \path (u2.center) edge[color=lightgray,line width=6pt] (a21.center);
    \path (a21.center) edge[color=lightgray,line width=6pt, dashed] (a24.center);
     \path (a24.center) edge[color=lightgray,line width=6pt] (a25.center);
    \path (a25.center) edge[color=lightgray,line width=6pt, out=-150, in=30, dashed] (u4.center);
    \path (u4.center) edge[color=lightgray,line width=6pt] (a41.center);
    \path (a41.center) edge[color=lightgray,line width=6pt, dashed] (a44.center);
    \path (a44.center) edge[color=lightgray,line width=6pt] (a45.center);
    \path (a45.center) edge[line width=6pt, out=-150, in=30] (u5.center);
    \path (u5.center) edge[color=lightgray,line width=6pt] (a51.center);
    \path (a51.center) edge[color=lightgray,line width=6pt, dashed] (a52.center);
    \path (a52.center) edge[color=lightgray,line width=6pt] (a53.center);
    \path (a53.center) edge[color=lightgray,line width=6pt] (a54.center);
    \path (a54.center) edge[color=lightgray,line width=6pt] (a55.center);
    \path (a55.center) edge[line width=6pt, out=-140, in=30] (s1.center);
    \end{pgfonlayer}
    

    \end{tikzpicture}
    \end{center}
    \caption{The lower half of the levels of the validation gadget $G_2$ for $h=5$. The circles annotated with $A_{i, j}$ or $D_{i, j}$ correspond to the respective vertex set from the construction, $s_1$ and $s_2$ are single vertices. The thick gray edges represent the edges between the two vertex sets they represent. The thick black edges indicate that all possible edges between the respective vertex sets are present. Dashed edges indicate that some vertex sets are not visualized. The higher levels of the gadget are not depicted.}
    \label{fig:hamilton:validation}
\end{figure}

To
finish the construction we set~$k:=2h+2h^2+2$.
 The reduction clearly runs in polynomial time and the new
  parameter~$k$ depends only on the parameter~$h$ of the
  \textsc{Multicolored Biclique} instance. 
  
  \emph{Correctness.}\quad Thus it remains to show
  equivalence of the instances. 
  
  $(\Rightarrow)$:
  Let~$K=\{u_1,\ldots ,u_h, w_1, \ldots, w_h\}$ be a multi-colored biclique in $H$.
  Let~$X$ be the vertex set containing~$K$,~$s_1$, and~$s_2$, and for each edge~$e$
  of~$H[K]$, the ascending and the descending vertex corresponding
  to~$e$. We show that~$G_1[X]$ and~$G_2[X]$ have a Hamiltonian
  path. In~$G_1$, this path starts at~$u_1$, then
  visits~$\alpha_{u_1,w_{1}}$, then~$\alpha_{u_1,w_{2}}$
  until~$\alpha_{u_1,w_{h}}$. Then it visits~$u_2$ and the ascending
   vertices corresponding to edges incident with~$u_2$ in the same
  fashion, that is, first~$\alpha_{u_2,w_{1}}$,
  then~$\alpha_{u_2,w_{2}}$, and so on. This is continued
  until~$\alpha_{u_h,w_{h}}$ is visited. Then, the path visits~$s_1$
  and~$s_2$. Then it visits the high vertices and the descending
  vertices for their incident edges in the same fashion. By
  construction, all necessary edges are present: neighboring edge
  vertices correspond to edges that share one endpoint, and after a
  vertex~$\alpha_i$, the next visited edge vertex is incident
  with~$\alpha_i$.
   
  In~$G_2$, the path visits each level exactly once, going from
  level~1 through level~$k$. The necessary edges are present since
  the only neighboring levels that are not complete bipartite graphs
  are those that contain ascending and descending vertices of the same
  color pair. Since for each color pair the ascending and descending
   vertex in~$S$ correspond to the same edge in~$H$, they are
  adjacent in~$G_2$.

  \looseness -1
 $(\Leftarrow)$: Observe that any vertex set consisting only of
  vertices from~$\{s_1,s_2\}\cup \bigcup_{1\le i\le h} (U_i \cup W_i)$ is
  disconnected either in~$G_1$ or in~$G_2$. Thus, the set~$X$ contains
  at least one ascending or descending vertex. In either case it
  must also contain an edge vertex of the other type: In~$G_2$, the
  levels containing ascending and descending vertices alternate and
  any set containing either only vertices of the first two
  levels of~$G_2$ or only of vertices of the last~$2h+2$ levels is disconnected
  in~$G_1$. Now, since~$X$ contains an ascending and a descending
  vertex and since~$G_1[X]$ is connected, the vertices~$s_1$ and~$s_2$
  are contained in~$X$. Thus, the vertex set~$X$ contains a vertex
  from the first and the last level of~$2h+2h^2+2$ levels
  in~$G_2$. This implies that it contains exactly one vertex of each
  level of~$G_2$. Thus, the vertex set~$X$ contains exactly~$h$ low
  vertices and~$h$ high vertices with different colors and it contains
  for each color pair an ascending and a descending vertex. By
  construction of~$G_2$ and since~$G_2[X]$ has a Hamiltonian path
  visiting each level exactly once, these two vertices are the same,
  that is, the~$2h^2$ ascending and descending vertices correspond
  to~$h^2$ edges in~$H$. By construction of~$G_1$ and since~$G_1[X]$
  has a Hamiltonian path visiting each level of~$G_1$ exactly once,
  these edges are incident only with vertices of~$(U\cup W)\cap
  X$. Thus,~$H[(U\cup W)\cap X]$ is a multicolored biclique. 
  
 To make
 this reduction work for any $t \geq \ell \geq 2$, we can insert $\ell-2$
 additional layers of complete graphs and $t-\ell$ layers of edgeless graphs.
\end{proof}

\section{Conclusion}
\label{sec:conclusion}
We performed a systematic study of the (parameterized) computational complexity of subgraph detection problems in
multi-layer networks. In particular, we encountered several computational hardness results
for multi-layer subgraph detection problems that are solvable in
polynomial time in the single-layer case. 
In the following, we list
some possibilities for future research with the goal to obtain positive algorithmic results.

First, the case of two-layer graphs should continue to receive special
attention. We showed that \MatchML is solvable in polynomial time in
the two-layer case, whereas it is \W{1}-hard when parameterized by the
number of vertices to select~$k$ for three or more layers. Considering
acyclic subgraphs, \citet{ALMS15} also
showed specialized algorithms for the two-layer case. Also \citet{CY14} focused mostly on the two-layer case. It
would be interesting to systematically explore which subgraph
detection problems are tractable in the two-layer case and to
identify more problems that behave differently for two and three
layers.

Second, in many applications the input graphs are directed. One of
our hardness results transfers directly to this case: The construction
in the reduction from \textsc{Multicolored Biclique} to
\PathML (proof of Theorem~\ref{thm:hamiltonian}) can
be easily adapted to yield directed acyclic graphs by orienting all
edges from lower levels to higher levels, implying that \PathML is \NP-hard and \W{1}-hard when parameterized by~$k$ for
  all $\ell \geq 2$ and $t \geq \ell$ if every layer is a directed acyclic
  graph.
Hence, for directed acyclic graphs the complexity gap between the
cases with one and two layers is even bigger because we can find a longest path in a directed acyclic graph 
in polynomial-time (as opposed to being \NP-hard and \FPT with respect to
subgraph order~$k$ in the undirected single-layer case). As also already mentioned by \citet{CY14}, finding positive algorithmic results for multi-layer subgraph problems in directed graphs is a challenging open direction.

Finally, in some applications, one is interested in different subgraph
properties for each layer~\cite{CY14,agrawal_et_al18}. For example, in
the \textsc{Supported Path} problem, one layer is undirected and one
layer is directed, and one aims to find an induced subgraph that is
connected in the undirected layer and Hamiltonian in the directed
layer~\cite{FKMR15}. There is only limited systematic
investigation of the complexity of such mixed multi-layer subgraph
problems~\cite{CY14,agrawal_et_al18}.

\paragraph{Acknowledgment.}
\begin{wrapfigure}{r}{0.17\textwidth}\vspace{-3mm}
\includegraphics[width=.8\linewidth]{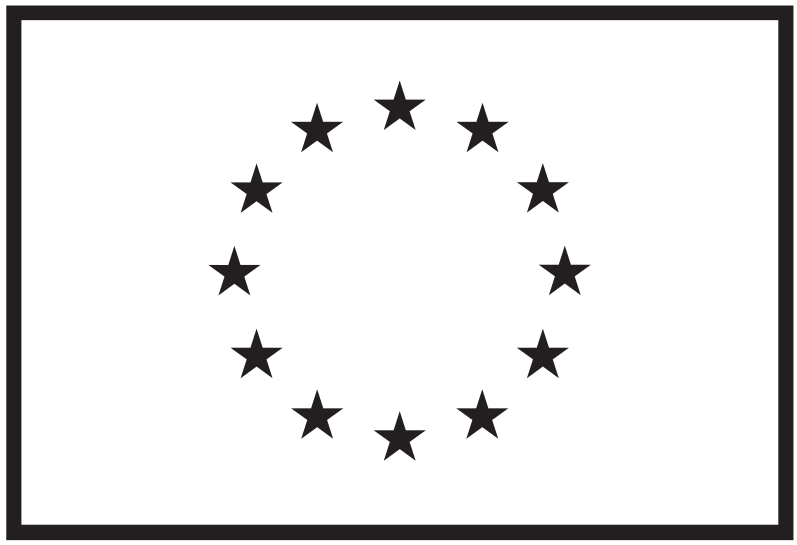}\\[3mm]
\includegraphics[width=.8\linewidth]{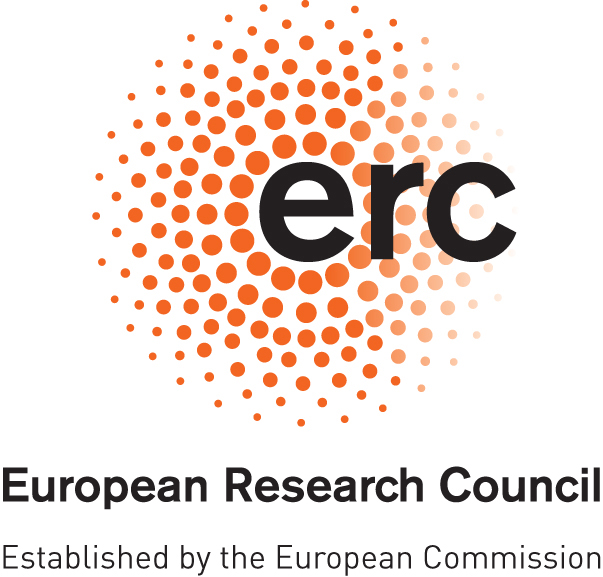}
\end{wrapfigure}
HM was partially supported by the DFG, projects DAPA (NI~369/12) and MATE (NI~369/17). MS was supported by the DFG, project DAPA (NI~369/12), the People Programme (Marie Curie Actions) of the European Union's Seventh Framework Programme (FP7/2007-2013) under REA grant agreement number {631163.11},
by the Israel Science Foundation (grant no. 551145/14), and by
the European Research Council (ERC) under the European Union's Horizon 2020 research and innovation programme under grant agreement number {714704}.
Parts of MS' work were done while with TU Berlin, Germany and Ben-Gurion University of the Negev, Beer Sheva, Israel. CK was supported by the DFG, project MAGZ (KO 3669/4-1).
RB was
partially supported by the DFG, fellowship BR 5207/2.
This work was initiated at the research retreat of the TU~Berlin Algorithmics and Computational Complexity research group held in Darlingerode, Harz mountains, Germany, April~2014. The authors would like to thank Sepp Hartung for initial
discussions.

\bibliographystyle{abbrvnat}
\bibliography{bibliography} 

\appendix
\section{Definitions}
\label{appendix:definitions}
\begin{description}
\item[Asteroidal Triple-Free Graph] An independent set of size three where each pair of vertices is joined by a path that
avoids the neighborhood of the third is called an \emph{asteroidal triple}. A graph is \emph{asteroidal triple-free} if it does not contain asteroidal triples.
\item[$c$-Colorable Graph] A graph is \emph{$c$-colorable} if there is a way of coloring the vertices with at most $c$ different colors such that no two adjacent vertices share the same color.\
\item[$c$-Connectivity] A graph is called \emph{$c$-connected} if it contains at least $c+1$ vertices, but does not contain a set of $c-1$ vertices whose removal disconnects the graph.
\item[$c$-Core] A graph is called a \emph{$c$-core} if each vertex has degree at least $c$.
\item[$c$-Edge-Connectivity] A graph is called \emph{$c$-edge-connected} if it does not contain a set of $c-1$ edges whose removal disconnects the graph.
\item[$c$-Factor] A graph has a \emph{$c$-factor} if it has a $c$-regular spanning graph.
\item[$c$-Regular Graph] A graph is called \emph{$c$-regular} if every vertex has degree $c$.
\item[$c$-Truss] A graph is called a \emph{$c$-truss} if it is connected and each edge is contained in at least $c-2$ triangles.
\item[Chordal Graph] A graph is called \emph{chordal} if each induced cycle has at most three vertices.
\item[Cluster Graph] A graph is called a \emph{cluster graph} if it is a collection of disjoint cliques.
\item[Cograph] A graph is called a \emph{cograph} if it does not contain any induced path of length four.
\item[Comparability Graph] A graph is called a \emph{comparability graph} if there is a partial order over the vertices such that each pair of vertices is adjacent if and only if it is comparable.
\item[Complete Multipartite Graph] A graph is called \emph{complete multipartite} if the vertex set can be partitioned such that each vertex pair is adjacent if and only if the two vertices are in different partitions.
\item[Edgeless Graph] A graph is called \emph{edgeless} is it does not contain any edges.
\item[Forest] A graph is called a \emph{forest} if it is a collection of trees.
\item[Hamiltonian Graph] A graph is called \emph{Hamiltonian} if it contains a \emph{Hamiltonian path}, that is, a simple path that visits each vertex exactly once.
\item[$h$-Index] The \emph{$h$-index} of a graph is the largest integer $h$ such that the graph contains at least $h$ vertices with degree at least $h$.
\item[Interval Graph] A graph is called an \emph{interval graph} if an interval of the real numbers can be assigned to each vertex such that two vertices are adjacent if and only if the intervals overlap.
\item[Line Graph] A graph is called a \emph{line graph} if there is another graph such that each vertex of the line graph corresponds to an edge of the other graph and two vertices in the line graph are adjacent if the corresponding edges in the other graph share a common endpoint.
\item[Matching] A graph has a \emph{matching} if it has a $1$-factor.
  \item[Perfect Graph] A graph is called \emph{perfect} if the chromatic number of every induced subgraph equals the size of the largest clique of that subgraph. The \emph{chromatic number} of a graph is the smallest $c$ such that the graph is $c$-colorable.
\item[Permutation Graph] A graph is called a \emph{permutation graph} if the vertices represent elements in a permutation such that two vertices are adjacent if and only if the respective pair of elements is reversed by the permutation.
\item[Planar Graph] A graph is called \emph{planar} if it can be embedded in the plane without edge-crossings, that is, it can be drawn on the plane such that its edges only intersect at their endpoints.
\item[Quasi-Threshold Graph] A graph is called a \emph{quasi-threshold graph} if it does not contain any induced path of length four and any induced cycle of length four.
\item[Split Graph] A graph is called a \emph{split graph} if its vertices can be partitioned into a clique and an independent set.
\item[Star] A graph is called a \emph{star} if it is a tree with only one internal node.
\end{description}
\end{document}